\newtheorem{theorem}{Theorem}
\newtheorem{lemma}{Lemma}
\newtheorem{proposition}{Proposition}
\newtheorem{corollary}{Corollary}
\newtheorem{example}{Example}
\author{
 El-Mehdi Mehiri \\
  USTHB, Faculty of Mathematics\\
  RECITS Laboratory\\
  BP 32, El Alia 16111, Bab Ezzouar\\
Algiers, Algeria \\
  \texttt{emehiri@usthb.dz} \\
  \texttt{mehiri314@gmail.com} \\
   \and
 Hacène Belbachir \\
  USTHB, Faculty of Mathematics\\
  RECITS Laboratory\\
  BP 32, El Alia 16111, Bab Ezzouar\\
Algiers, Algeria \\
  \texttt{hbelbachir@usthb.dz} \\
  \texttt{hacenebelbachir@gmail.com } \\
}
\title{\textbf{The weighted Tower of Hanoi}}
\date{}
\begin{document}
	
	\maketitle
 
	\begin{abstract}
	    The weighted Tower of Hanoi is a new generalization of the classical Tower of Hanoi problem, where a move of a disc  between two pegs $i$ and $j$ is  weighted by a positive real $w_{ij}\geq 0$. This new problem generalizes  the concept of finding the minimum number of moves to solve the Tower of Hanoi, to find a sequence of moves with the minimum total cost. We present an optimal dynamic algorithm to solve the weighted Tower of Hanoi problem, we also establish some properties of this problem, as well as its relation with the Tower of Hanoi variants that are based on move restriction.\\
	     \textbf{Key words :} Tower of Hanoi, dynamic programming, weighted Tower of Hanoi, Recursion. 
	     
	\end{abstract}
    \section{Introduction}
The Tower of Hanoi puzzle was introduced to the world by Edward Lucas in 1884 \cite{lucas1883recreations}, since then those interested in the puzzle have created new variants in order to raise its difficulty level and to study it more, sometimes by increasing the number of pegs \cite{dudeney2002canterbury,frame1941solution,stockmeyer1994variations} or considering an arbitrary initial and final states \cite{hinz2018tower} (which can be considered as a generalization) and at other times, by forbidding certain movements of discs between certain pegs \cite{stockmeyer1994variations,atkinson1981cyclic,sapir2004tower} (which is a kind of restriction of the problem), while others have chosen to allow the discs to be placed on top of smaller  discs \cite{wood1980towers} (which is considered a relaxation of the problem). For more on the Tower of Hanoi problem, we point the interested reader to \cite{hinz2018tower}.

The original Lucas's Tower of Hanoi  problem is stated as follows: 
Consider three pegs, source peg, intermediate peg and destination peg; and $n\in \mathbb{N}_{0}$ discs of distinct diameters. Initially, all discs are stacked on the first peg (the source) ordered by their diameters, with the smallest one on top and the largest one on the bottom. The goal is to transfer the $n$ discs to the third peg (the destination) using the minimum number of moves, and respecting the following rules:

\begin{itemize}
\item[$(i)$] at each step only one disc can be moved;
       \item[$(ii)$] the disc moved must be a topmost disc;
       \item[$(iii)$] a disc cannot reside on a smaller one.
\end{itemize}

Solving the problem requires exactly $2^{n}-1$ moves, which can be shown to be optimal \cite{hinz2018tower}. However, there has been continued interest in the problem from several viewpoints and not just the minimum number of moves.

The Tower of Hanoi has its applications in many fields such as didactic of mathematics,  psychology, industry and logistics. The Tower of Hanoi and its variants are used to introduce the concept of mathematical induction to students in computer science, and discrete mathematics. \\

The Tower of Hanoi is a model commonly used in cognitive psychology and neuropsychology to study and examine problem-solving skills which can be tested by calculating moves and strategies while predicting possible outcomes,  we reefer to \cite{hinz2009mathematical,anderson2005tracing,fansher2022effect} for more information about the Tower of Hanoi applications in psychology. The Tower of Hanoi puzzle can be used to model a class of logistic problems which called Pile problems  \cite{hempel2006pile}, in \cite{aguilar2016application} authors discussed the application of Towers of Hanoi in logistics management and in Pile problems in particular. However, the Tower of Hanoi can be seen as a scheduling problem, where the hand playing with discs is a machine or a crane in a big harbor and discs are the containers in the harbor. Only three zones in the harbor are used to stack containers which represents the three pegs, the crane move the containers from a zone to another while respecting the rules $(i-iii)$ of the Tower of Hanoi, where the discs in this case are the containers of different sizes. Given an initial state of the containers in harbor, the crane is asked to move containers to reach another state while minimizing the number of moves of containers between the three zones, this problem is an optimization problem.

In order to expand research on the Tower of Hanoi problem  we present in this paper a new optimization problem which is a new generalization of the Tower of Hanoi problem that has never been considered before in the literature, in which the movements of the discs between the pegs are weighted using a definite weighting function of the set of pegs in the set of positive reals
$\mathbb{R}_{+}$, which makes the problem less trivial than the classical version of the problem. We call this new problem by the weighted Tower of Hanoi, this optimization problem models the harbor problem mentioned before more effectively, because it is more realistic to assume that a movement of a container from a zone to another have a cost, so the goal here is to minimize the total cost and not the number of moves.

\section{Description of the problem}
Consider the same elements as the original Tower of Hanoi problem, three pegs $i$, $j$ and $k$, and a set of $n\in \mathbb{N}_{0}$ discs of different diameters, and the same $(i)-(iii)$ rules of the classical version. In this new problem, a move  from peg $i$ to peg $j$ is weighted by a cost $w_{ij}\in\mathbb{R}^{+}$, where the objective is to transfer the tower of the $n$ discs from the source peg to the destination peg by minimizing the total cost (sum of costs) of the moves used in the solution.\\
The following digraph is called the weighted movement digraph, where the vertices represent  the pegs and the weighted arcs are the movements between pegs with their costs.

\begin{figure}[H]
        \centering
    
\tikzset{every picture/.style={line width=0.75pt}} 

\begin{tikzpicture}[x=0.75pt,y=0.75pt,yscale=-1,xscale=1]

\draw  [fill={rgb, 255:red, 0; green, 0; blue, 0 }  ,fill opacity=1 ] (286,56.5) .. controls (286,52.36) and (289.36,49) .. (293.5,49) .. controls (297.64,49) and (301,52.36) .. (301,56.5) .. controls (301,60.64) and (297.64,64) .. (293.5,64) .. controls (289.36,64) and (286,60.64) .. (286,56.5) -- cycle ;
\draw  [fill={rgb, 255:red, 0; green, 0; blue, 0 }  ,fill opacity=1 ] (408,257.5) .. controls (408,253.36) and (411.36,250) .. (415.5,250) .. controls (419.64,250) and (423,253.36) .. (423,257.5) .. controls (423,261.64) and (419.64,265) .. (415.5,265) .. controls (411.36,265) and (408,261.64) .. (408,257.5) -- cycle ;
\draw  [fill={rgb, 255:red, 0; green, 0; blue, 0 }  ,fill opacity=1 ] (169,255.5) .. controls (169,251.36) and (172.36,248) .. (176.5,248) .. controls (180.64,248) and (184,251.36) .. (184,255.5) .. controls (184,259.64) and (180.64,263) .. (176.5,263) .. controls (172.36,263) and (169,259.64) .. (169,255.5) -- cycle ;

\draw  [color={rgb, 255:red, 0; green, 0; blue, 0 }  ,draw opacity=0 ][fill={rgb, 255:red, 0; green, 0; blue, 0 }  ,fill opacity=0 ]  (295, 56) circle [x radius= 13.6, y radius= 13.6]   ;
\draw (289,48.4) node [anchor=north west][inner sep=0.75pt]    {};
\draw  [color={rgb, 255:red, 0; green, 0; blue, 0 }  ,draw opacity=0 ]  (416, 256) circle [x radius= 13.6, y radius= 13.6]   ;
\draw (410,248.4) node [anchor=north west][inner sep=0.75pt]    {};
\draw  [color={rgb, 255:red, 0; green, 0; blue, 0 }  ,draw opacity=0 ]  (175, 256) circle [x radius= 13.6, y radius= 13.6]   ;
\draw (169,248.4) node [anchor=north west][inner sep=0.75pt]    {};
\draw (240,156.4) node [anchor=north west][inner sep=0.75pt]    {$w_{13}$};
\draw (211,139.4) node [anchor=north west][inner sep=0.75pt]    {$w_{31}$};
\draw (364,140.4) node [anchor=north west][inner sep=0.75pt]    {$w_{12}$};
\draw (324,156.4) node [anchor=north west][inner sep=0.75pt]    {$w_{21}$};
\draw (283,261.4) node [anchor=north west][inner sep=0.75pt]    {$w_{23}$};
\draw (282,232.4) node [anchor=north west][inner sep=0.75pt]    {$w_{32}$};
\draw (307,39.9) node [anchor=north west][inner sep=0.75pt]    {$1$};
\draw (151,238.9) node [anchor=north west][inner sep=0.75pt]    {$3$};
\draw (430,240.9) node [anchor=north west][inner sep=0.75pt]    {$2$};
\draw    (175.39,242.4) .. controls (187.63,212.87) and (201.69,184.44) .. (217.57,157.09)(227.76,140.11) .. controls (244.23,113.52) and (262.48,88.01) .. (282.51,63.56) ;
\draw [shift={(283.18,62.74)}, rotate = 129.42] [color={rgb, 255:red, 0; green, 0; blue, 0 }  ][line width=0.75]    (10.93,-3.29) .. controls (6.95,-1.4) and (3.31,-0.3) .. (0,0) .. controls (3.31,0.3) and (6.95,1.4) .. (10.93,3.29)   ;
\draw    (294.61,69.6) .. controls (282.37,99.13) and (268.31,127.56) .. (252.43,154.91)(242.24,171.89) .. controls (225.77,198.48) and (207.52,223.99) .. (187.49,248.44) ;
\draw [shift={(186.82,249.26)}, rotate = 309.42] [color={rgb, 255:red, 0; green, 0; blue, 0 }  ][line width=0.75]    (10.93,-3.29) .. controls (6.95,-1.4) and (3.31,-0.3) .. (0,0) .. controls (3.31,0.3) and (6.95,1.4) .. (10.93,3.29)   ;
\draw    (306.8,62.77) .. controls (327.05,87.55) and (345.55,113.39) .. (362.32,140.28)(372.59,157.26) .. controls (388.45,184.29) and (402.61,212.34) .. (415.07,241.43) ;
\draw [shift={(415.48,242.41)}, rotate = 246.9] [color={rgb, 255:red, 0; green, 0; blue, 0 }  ][line width=0.75]    (10.93,-3.29) .. controls (6.95,-1.4) and (3.31,-0.3) .. (0,0) .. controls (3.31,0.3) and (6.95,1.4) .. (10.93,3.29)   ;
\draw    (404.04,262.48) .. controls (372.72,266.58) and (341.4,268.9) .. (310.08,269.45)(280.92,269.45) .. controls (249.96,268.91) and (219.01,266.63) .. (188.05,262.62) ;
\draw [shift={(186.96,262.48)}, rotate = 7.46] [color={rgb, 255:red, 0; green, 0; blue, 0 }  ][line width=0.75]    (10.93,-3.29) .. controls (6.95,-1.4) and (3.31,-0.3) .. (0,0) .. controls (3.31,0.3) and (6.95,1.4) .. (10.93,3.29)   ;
\draw    (404.2,249.23) .. controls (383.95,224.45) and (365.45,198.61) .. (348.68,171.72)(338.41,154.74) .. controls (322.55,127.71) and (308.39,99.66) .. (295.93,70.57) ;
\draw [shift={(295.52,69.59)}, rotate = 66.9] [color={rgb, 255:red, 0; green, 0; blue, 0 }  ][line width=0.75]    (10.93,-3.29) .. controls (6.95,-1.4) and (3.31,-0.3) .. (0,0) .. controls (3.31,0.3) and (6.95,1.4) .. (10.93,3.29)   ;
\draw    (186.96,249.52) .. controls (218.28,245.42) and (249.6,243.1) .. (280.92,242.55)(310.08,242.55) .. controls (341.04,243.09) and (371.99,245.37) .. (402.95,249.38) ;
\draw [shift={(404.04,249.52)}, rotate = 187.46] [color={rgb, 255:red, 0; green, 0; blue, 0 }  ][line width=0.75]    (10.93,-3.29) .. controls (6.95,-1.4) and (3.31,-0.3) .. (0,0) .. controls (3.31,0.3) and (6.95,1.4) .. (10.93,3.29)   ;

\end{tikzpicture}

        \caption{The weighted movement digraph}
        \label{fig:graph_complet_avec_poids}
    \end{figure}
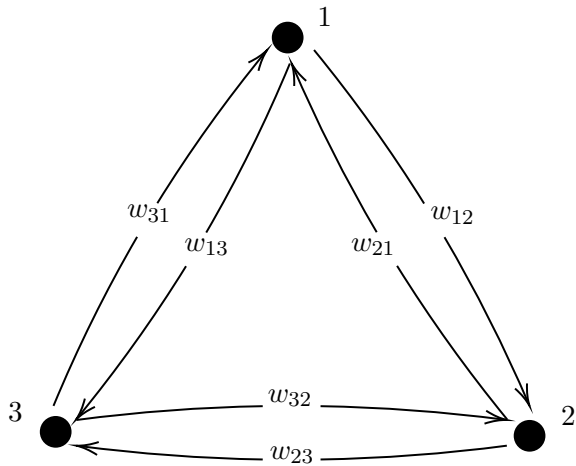
It is easy to see that the optimal solution is not necessarily unique, for example, if we take the costs as follows $w_{13}=w_{12}+w_{23}$ then the problem of transferring a tower of a single disc $n=1$ located on peg $1$ towards peg $3$, have two optimal solutions with the same cost $w_{13}=w_{12}+w_{23}$, but they differ by the number of moves, one solution solves the problem using one move, while the other solves it using two moves.\\

This last remark gives rise to a new question, what is the solution that costs the minimum and solves the problem using the minimum number of moves? In this case, the problem becomes a bi-objective optimization problem, in which the two objectives are the minimization of cost and moves at the same time.\\
In the next section we present an optimal algorithm that solves the problem by giving the priority to the minimum total cost where it searches  among solutions with the minimum total cost the one that have the minimum number of moves.
\section{A recursive formula for the optimal total cost and an optimal algorithm}
Let $C_{n}^{i,j}$ be the total cost of the optimal solution(s) of a weighted Tower of Hanoi with $n$ discs where $i$ is the source peg and $j$ the destination peg with $i ,j\in\{1,2,3\}$, $i\neq j$ and $k=6-i-j$ is the intermediate peg.\\
The total cost $C_{n}^{i,j}$ of the optimal solution(s) can be calculated recursively in terms of the number of discs $n$, the recursion formula is given as follows.
\begin{theorem}\label{theo1}
For all $n\in\mathbb{N}$, and $i$, $j$ and $k$ are respectively the source, destination and intermediate pegs, 
\begin{equation}\label{1}C_{n}^{i,j}=
\begin{cases}
\min\{w_{ij},w_{ik}+w_{kj}\}&if\; n=1,\\
\min\{C_{n-1}^{i,k}+C_{n-1}^{k,j}+w_{ij},2C_{n-1}^{i,j}+C_{ n-1}^{j,i}+w_{ik}+w_{kj}\}&otherwise.
\end{cases}
\end{equation}
\end{theorem}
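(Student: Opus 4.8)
The plan is to prove the claimed equality by showing separately that $C_n^{i,j}$ is at most, and at least, the minimum of the two quantities $A:=C_{n-1}^{i,k}+C_{n-1}^{k,j}+w_{ij}$ and $B:=2C_{n-1}^{i,j}+C_{n-1}^{j,i}+w_{ik}+w_{kj}$, the whole argument being organised around the moves of the largest disc, which I call disc $n$. The case $n=1$ needs nothing: a single disc can only be carried directly ($i\to j$, cost $w_{ij}$) or with one detour ($i\to k\to j$, cost $w_{ik}+w_{kj}$), and any longer legal sequence only piles on extra nonnegative weights, so $C_1^{i,j}=\min\{w_{ij},w_{ik}+w_{kj}\}$. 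Throughout I also use the triangle-type inequality $C_{m}^{a,b}\leq C_{m}^{a,c}+C_{m}^{c,b}$ (with the convention $C_m^{x,x}=0$), which is immediate from the definition of $C_m$ as an optimal cost: concatenate an optimal $a\to c$ transfer with an optimal $c\to b$ transfer. A routine induction on $n$ also guarantees along the way that every $C_{n-1}^{\cdot,\cdot}$ occurring below is finite, and indeed attained.

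For the upper bound (case $n>1$) I would exhibit two explicit legal strategies. Strategy A: transfer the top $n-1$ discs optimally from $i$ to $k$, move disc $n$ from $i$ to $j$, then transfer the $n-1$ discs optimally from $k$ to $j$; its total cost is $A$. Strategy B: transfer the top $n-1$ discs optimally from $i$ to $j$, move disc $n$ from $i$ to $k$, transfer the $n-1$ discs back from $j$ to $i$, move disc $n$ from $k$ to $j$, transfer the $n-1$ discs from $i$ to $j$; its total cost is $B$. Both are legal because disc $n$ is only ever placed on an empty peg, and the smaller discs may always rest on top of it; hence $C_n^{i,j}\leq\min\{A,B\}$.

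For the lower bound, let $S$ be any legal solution for $n$ discs and list the successive pegs $i=p_0\to p_1\to\cdots\to p_m=j$ visited by disc $n$ (so $p_{t-1}\neq p_t$). Put $c_t:=6-p_{t-1}-p_t$ for $1\leq t\leq m$ and $c_0:=i$, $c_{m+1}:=j$. At the instant disc $n$ performs its $t$-th move it must be topmost on $p_{t-1}$ and must land on the empty peg $p_t$, so all $n-1$ smaller discs are stacked on $c_t$; and between two consecutive moves of disc $n$ (as well as before the first and after the last) the smaller discs carry out a self-contained $(n-1)$-disc weighted instance, since disc $n$ stays put and can be piled upon. Therefore
\[
\operatorname{cost}(S)\ \geq\ \sum_{t=1}^{m} w_{p_{t-1}p_t}\ +\ \sum_{t=0}^{m} C_{n-1}^{c_t,c_{t+1}}.
\]
Now I would collapse the interior terms by the triangle inequality, $\sum_{t=1}^{m-1}C_{n-1}^{c_t,c_{t+1}}\geq C_{n-1}^{c_1,c_{m}}$, discard the nonnegative weights of all but the first and last moves of disc $n$, and distinguish the four cases coming from $p_1\in\{j,k\}$ (first move) and $p_{m-1}\in\{i,k\}$ (last move). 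When $p_1=j$, or when $p_1=k$ and $p_{m-1}=i$, one more use of the triangle inequality for $C_{n-1}$ yields $\operatorname{cost}(S)\geq A$; when $p_1=k$ and $p_{m-1}=k$ it yields $\operatorname{cost}(S)\geq B$. In every case $\operatorname{cost}(S)$ dominates one of $A,B$, hence $\min\{A,B\}$, which finishes the lower bound and the proof.

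The delicate point, and the one I would spend most care on, is the lower bound: a priori disc $n$ may shuttle back and forth arbitrarily often (for instance along $i\to a\to b\to i\to a\to\cdots$), so one cannot simply excise a wasted round trip. The decomposition above sidesteps this by reducing everything to the \emph{first and last} moves of disc $n$, letting the triangle inequality for $C_{n-1}$ and nonnegativity of the weights absorb all intermediate motion. Two subsidiary points also deserve a clean justification: that between two moves of disc $n$ the smaller discs genuinely face an unconstrained weighted Hanoi on three pegs (so each block costs at least the relevant $C_{n-1}$, disc $n$ being an available base), and that the optimum defining $C_n^{i,j}$ is actually achieved — which is precisely what Strategy A or Strategy B supplies.
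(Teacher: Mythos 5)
Your proof is correct, and it is genuinely more complete than the one in the paper. The paper's argument (induction on $n$) establishes only what you call the upper bound: it exhibits the two strategies, "move disc $n$ directly $i\to j$" and "move it $i\to k\to j$", computes their costs $C_{n-1}^{i,k}+C_{n-1}^{k,j}+w_{ij}$ and $2C_{n-1}^{i,j}+C_{n-1}^{j,i}+w_{ik}+w_{kj}$, and simply asserts that these are the only two ways to handle the largest disc, so the optimum is their minimum. It never rules out solutions in which disc $n$ moves more than twice (e.g.\ $i\to k\to i\to j$ or longer shuttles), nor solutions in which the smaller discs are not regrouped optimally between moves of disc $n$; that is exactly the lower bound your proposal supplies. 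Your decomposition of an arbitrary legal solution along the itinerary $i=p_0\to\cdots\to p_m=j$ of disc $n$, the observation that at each such move the $n-1$ smaller discs sit on the complementary peg $c_t$ and that each intervening block is an unconstrained $(n-1)$-disc instance, and the collapse of the interior terms via the triangle inequality $C_m^{a,b}\le C_m^{a,c}+C_m^{c,b}$ together with nonnegativity of the weights, correctly reduce every solution to one of the two canonical strategies (your four-way case check on $p_1$ and $p_{m-1}$ is exhaustive and each case lands on $A$ or $B$). What the paper's version buys is brevity and a direct translation into Algorithm \textsc{WTHD}; what yours buys is an actual proof that the recurrence computes the optimum rather than merely the best of two candidate schemes, which matters here precisely because the weights are arbitrary nonnegative reals and the "obvious" strategies are not a priori exhaustive.
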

Note that for the trivial case $n=0$ we have $C_{0}^{i,j}=0$.
\begin{proof}
Using induction on the number of discs $n\in \mathbb{N}$, we present the following proof.\\
Consider $n$ discs, the source and destination pegs are $i$ and $j$ respectively, which implies that $k=6-i-j$ is the intermediate peg, the objective is to find a recursive formula for $C_{n}^{i,j}$.\\
For $n=1$, we have two candidate solutions, which are either to move the unique disc from peg $i$ directly to peg $j$ with a cost $w_{ij}$ or to move it first to the intermediate peg $k$ then to peg $j$ with a total costs equal to $w_{ik}+w_{kj}$, then we have $C_{1}^{i,j}=\min\{w_{ij},w_{ik}+w_{kj}\}$.\\
Induction hypothesis, suppose that the optimal solutions of any weighted Tower of Hanoi having at most $k<n$ discs, have a cost equal to $C_{k}^{i,j}$, then  show that the weighted Tower of Hanoi with $n$ discs have optimal solutions with total cost equals to $C_{n}^{i,j}$.\\
To solve this problem with more than one disc, we have to find a way to move the biggest disc that have a diameter equal to $n$. To do this, we have two ways, the first way is to move the biggest disc directly from source peg $i$ to the destination peg $j$ by transferring the  $(n-1)$ smaller discs to the intermediate peg $k$, then when the biggest disc is on the destination peg, the smaller $(n-1)$ discs can be transferred from intermediate peg $k$ to destination peg $j$, this solution costs $C_{n-1}^{i,k}+C_{n-1}^{k,j}+w_{ij}$ according to the induction hypothesis. The second way is  to move the biggest disc from source peg $i$ to intermediate peg $k$ and then to  destination peg $j$ by transferring the $(n-1)$ smaller discs from source peg $i$ to  destination peg $j$ which allows the biggest peg to be transferred from source  
peg $i$  to  intermediate peg $k$, and then transferring them  (the $n-1$ smaller  discs) from destination peg $j$ to source peg $i$, which liberate the destination peg $j$ so that the biggest disc can be moved there, and finally moving the $(n-1)$ smaller discs from source peg $i$ to  destination peg $j$, this solution costs according to induction hypothesis $2C_{n-1}^{i,j}+C_{ n-1}^{j,i}+w_{ik}+w_{kj}$. The choice between these two solutions depends on which one costs less, hence the optimal solution(s) have a cost equal to $C_{n}^{i,j}=\min\{C_{n-1}^{i,k}+C_{n-1}^{k,j}+w_{ij},2C_{n-1}^{i,j}+C_{ n-1}^{j,i}+w_{ik}+w_{kj}\}$.
\end{proof}

The proof of Theorem \ref{theo1}, can be used to construct a recursive optimal algorithm to solve the weighted Tower of Hanoi problem efficiently, the algorithm is presented bellow.
\begin{algorithm}[H]
\caption{The Weighted Tower of Hanoi dynamic programming algorithm}
\label{WTH2}
\begin{algorithmic}[1]
\Procedure{WTHD}{$n,i,j$}
\State $k:=6-i-j;$
\If{$n=1$}
\If{$w_{ij}\leq w_{ik}+w_{kj}$}
\State move disc $d_{n}$ from peg $i$ to peg $j$;
\Else
\State move disc $d_{n}$ from peg $i$ to peg $k$;
\State move disc $d_{n}$ from peg $k$ to peg $j$;
\EndIf
\Else
\If{$C_{n-1}^{i,k}+C_{n-1}^{k,j}+w_{ij}\leq 2C_{n-1}^{i,j}+C_ {n-1}^{j,i}+w_{ik}+w_{kj}$}
\State \textsc{WTHD}($n-1,i,k$);
\State move disc $d_{n}$ from peg $i$ to peg $j$;
\State \textsc{WTHD}($n-1,k,j$);
\Else
\State \textsc{WTHD}($n-1,i,j$);
\State move disc $d_{n}$ from peg $i$ to peg $k$;
\State \textsc{WTHD}($n-1,j,i$);
\State move disc $d_{n}$ from peg $k$ to peg $j$;
\State \textsc{WTHD}($n-1,i,j$);
\EndIf
\EndIf
\EndProcedure
\end{algorithmic}
\end{algorithm}

\begin{corollary}
The optimal solution provided by the \textsc{WTHD} algorithm has the minimum number of moves among the optimal solutions.
\end{corollary}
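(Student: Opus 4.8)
The plan is to prove, by induction on $n$, that for every admissible source--destination pair the number $\mu_n^{i,j}$ of moves performed by \textsc{WTHD}$(n,i,j)$ equals $m_n^{i,j}$, the least number of moves of any cost-optimal solution of the weighted Tower of Hanoi with $n$ discs from $i$ to $j$. Reading the pseudocode, $\mu_1^{i,j}$ is $1$ when $w_{ij}\le w_{ik}+w_{kj}$ and $2$ otherwise, and for $n\ge 2$ it equals $\mu_{n-1}^{i,k}+\mu_{n-1}^{k,j}+1$ when $C_{n-1}^{i,k}+C_{n-1}^{k,j}+w_{ij}\le 2C_{n-1}^{i,j}+C_{n-1}^{j,i}+w_{ik}+w_{kj}$ and $2\mu_{n-1}^{i,j}+\mu_{n-1}^{j,i}+2$ otherwise. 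A routine induction mirroring the proof of Theorem~\ref{theo1} shows that the solution returned by \textsc{WTHD}$(n,i,j)$ has cost exactly $C_n^{i,j}$, hence is cost-optimal; since it uses $\mu_n^{i,j}$ moves this already gives $m_n^{i,j}\le\mu_n^{i,j}$, so the whole content is the reverse inequality.

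For that I would first record the structural description of an arbitrary solution --- exactly the idea behind the proof of Theorem~\ref{theo1}: since $d_n$ can be placed only on an empty peg, between two consecutive moves of $d_n$ the discs $d_1,\dots,d_{n-1}$ occupy, stacked in order, the unique peg not carrying $d_n$; thus a solution is determined by the walk $i=p_0\to p_1\to\cdots\to p_L=j$ traced by $d_n$ together with the $L+1$ interleaved $(n-1)$-disc sub-solutions, and both its total cost and its move count split as the corresponding sums. I would then argue that $m_n^{i,j}$ is always attained with $L\le 2$: on three pegs every walk of length $\ge 3$ from $i$ to $j$ contains a segment $x\to y\to x$, the part of the solution realizing it (the two $d_n$-moves and the necessarily trivial sub-solution between them) returns to the same configuration in which it started, and excising it produces a valid solution with two fewer moves and no larger cost because all weights are nonnegative; iterating, every cost-optimal solution with $L\ge 3$ is beaten in move count by a cost-optimal one with $L\le 2$.

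Next I would read off the recursion for $m_n^{i,j}$ from the two remaining cases. A solution with $L=1$ moves $d_n$ straight from $i$ to $j$, so it decomposes as ``solve $(n-1,i,k)$'', then $d_n\colon i\to j$, then ``solve $(n-1,k,j)$''; by additivity of cost such a solution is cost-optimal iff both sub-solutions are and $C_{n-1}^{i,k}+C_{n-1}^{k,j}+w_{ij}=C_n^{i,j}$, and among those the minimum move count is $m_{n-1}^{i,k}+m_{n-1}^{k,j}+1$. Likewise $L=2$ forces $d_n\colon i\to k\to j$ with the five-block decomposition ``solve $(n-1,i,j)$, $d_n\colon i\to k$, solve $(n-1,j,i)$, $d_n\colon k\to j$, solve $(n-1,i,j)$'', giving minimum move count $2m_{n-1}^{i,j}+m_{n-1}^{j,i}+2$ exactly when $2C_{n-1}^{i,j}+C_{n-1}^{j,i}+w_{ik}+w_{kj}=C_n^{i,j}$. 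By Theorem~\ref{theo1} at least one of these two equalities holds, so $m_n^{i,j}$ is the minimum of the relevant block values over the cost-optimal branch(es), and by the induction hypothesis $m_{n-1}=\mu_{n-1}$ these block values are precisely the two expressions occurring in the recursion for $\mu_n^{i,j}$. Since \textsc{WTHD} picks the cost-optimal branch and breaks a cost tie in favour of the first ($L=1$) branch, $\mu_n^{i,j}=m_n^{i,j}$ follows in every case except possibly when both branch costs coincide while the $L=2$ branch has strictly fewer moves.

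The hard part is therefore to exclude exactly that: to show that whenever $C_{n-1}^{i,k}+C_{n-1}^{k,j}+w_{ij}=2C_{n-1}^{i,j}+C_{n-1}^{j,i}+w_{ik}+w_{kj}$ one has $m_{n-1}^{i,k}+m_{n-1}^{k,j}+1\le 2m_{n-1}^{i,j}+m_{n-1}^{j,i}+2$. For $n-1\in\{0,1\}$ this is immediate (and supplies the base of the induction). Intuitively the branch with the larger cost should never have fewer moves because the $L=2$ strategy performs three $(n-1)$-disc transfers against the $L=1$ strategy's two; the delicate point is that the $L=2$ transfers are the ``cheap'' pairs $(i,j),(j,i)$ while the $L=1$ transfers use the intermediate peg $k$, so one must rule out weights that blow up $m_{n-1}^{i,k}$ or $m_{n-1}^{k,j}$ while keeping $m_{n-1}^{i,j}$ small. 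My intended route is to combine the cost-tie hypothesis with the triangle inequalities $C_p^{a,c}\le C_p^{a,b}+C_p^{b,c}$ (obtained by concatenating solutions), their move-count versions $m_p^{a,c}\le m_p^{a,b}+m_p^{b,c}$ valid along tight $C$-triangles, and the recursions for $C_{n-1}$ and $m_{n-1}$ from the induction hypothesis, exploiting that making peg $k$ expensive --- the only mechanism that inflates $m_{n-1}^{i,k}$ or $m_{n-1}^{k,j}$ --- simultaneously inflates $m_{n-1}^{i,j}$, since $k$ is exactly the intermediate peg used to move discs between $i$ and $j$. Turning this trade-off, under the cost-tie constraint, into the stated inequality is the crux of the argument.
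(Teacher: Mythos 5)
Your setup is sound and, in fact, more careful than what the paper itself does: the decomposition of an arbitrary solution by the walk of $d_n$, the excision argument reducing to $L\le 2$, and the identification of $m_n^{i,j}$ as the minimum of $m_{n-1}^{i,k}+m_{n-1}^{k,j}+1$ and $2m_{n-1}^{i,j}+m_{n-1}^{j,i}+2$ over the cost-optimal branch(es) are all correct. But the proof is not finished: the inequality
\[
m_{n-1}^{i,k}+m_{n-1}^{k,j}+1\;\le\;2m_{n-1}^{i,j}+m_{n-1}^{j,i}+2
\]
in the cost-tie case is exactly where the content lies, and you only describe an ``intended route'' toward it. The crude bounds $2^{p}-1\le m_p\le 3^{p}-1$ show the inequality is not automatic once $n-1\ge 2$ (the left side can reach $2\cdot 3^{n-1}-1$ while the right side can be as small as $3\cdot 2^{n-1}-1$), so some use of the tie constraint is genuinely required, and the triangle-inequality machinery you gesture at is not carried out. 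As written, the induction does not close.

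For comparison, the paper's own proof is a short informal argument: in a tie it observes that the $L=1$ branch moves the largest disc once while the $L=2$ branch moves it twice, concludes the $L=1$ branch ``will generate less number of moves,'' and notes that the $\le$ in the algorithm therefore always selects the move-minimal option. It never compares the move counts of the recursive sub-solutions ($m_{n-1}^{i,k}+m_{n-1}^{k,j}$ versus $2m_{n-1}^{i,j}+m_{n-1}^{j,i}$) at all --- which is precisely the step you correctly isolated as the crux. So you have not missed a trick that the paper supplies; rather, you have located a gap that the paper's proof also leaves open. To actually complete the argument you would need either to prove the displayed inequality under the tie hypothesis (e.g.\ via the tight-triangle relations you mention, worked out in detail) or to find a counterexample showing the corollary needs a stronger hypothesis; as submitted, the proposal is an honest plan rather than a proof.
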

\begin{proof}
Whenever having the situation $w_{ij}=w_{ik}+w_{kj}$, it is clear that choosing to move the disc from peg $i$ to peg $j$ directly will generate fewer  moves then choosing to move it from peg $i$ to peg $k$ then to peg $j$. The same thing in the situation where $C_{n-1}^{i, k}+C_{n-1}^{k,j}+w_{ij}= 2C_{n-1}^{i,j}+C_{n-1}^{j,i}+w_{ik }+w_{kj}$, choosing to move discs according to the left side of the equality will generate less number of moves, because the left side corresponds to moving the biggest disc directly from source peg to destination peg using only one move rather than moving it through intermediate peg as corresponds to the right side of the equality where two moves are used to move the biggest disc to destination peg. Whatever the situation, algorithm \textsc{WTHD} always takes the choice with the least number of moves and this is due to the use of the comparison operator $\leq$ when comparing between $w_{ij}$ and $w_{ik}+w_{kj}$, and between $C_{n-1}^{i, k}+C_{n-1}^{k,j}+w_{ij}$ and $2C_{n-1}^{i,j}+C_{n-1}^{j,i}+w_{ik }+w_{kj}$.
\end{proof}

Let $u_{n}^{i,j}$ be the number of moves of an optimal solution of the WTH with $n$ discs, then we have the following result.
\begin{proposition} For all $n\geq 0$
$$2^{n}-1\leq u_{n}^{i,j} \leq 3^{n}-1 .$$
\end{proposition}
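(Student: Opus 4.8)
The plan is to establish the two inequalities separately. For the lower bound $u_n^{i,j}\geq 2^n-1$, I would observe that any admissible sequence of moves for the weighted Tower of Hanoi — in particular an optimal one — is also an admissible sequence of moves for the classical Tower of Hanoi, since both problems obey the same rules $(i)$–$(iii)$ and the same initial and final configurations; the classical optimality result recalled in the introduction ($2^n-1$ moves being the minimum) then forces at least $2^n-1$ moves. For completeness I would reproduce the standard argument inductively: before the disc of diameter $n$ can be placed on peg $j$, the $n-1$ smaller discs must all sit on peg $k$, costing at least $u_{n-1}$ moves, then at least one move of the largest disc, then at least $u_{n-1}$ further moves to restack the smaller discs, so $u_n\geq 2u_{n-1}+1$ and hence $u_n\geq 2^n-1$, with the base case $u_0=0$ immediate.

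For the upper bound I would argue by induction on $n$ using the recursive structure exhibited in the proof of Theorem \ref{theo1} (equivalently, the branching of Algorithm \ref{WTH2}). It is convenient to work with $U_n:=\max_{i\neq j}u_n^{i,j}$, the worst move count over all six ordered source–destination pairs, rather than with a fixed pair, because the two candidate constructions invoke subproblems with permuted endpoints. In the first construction the solution performs an $(n-1)$-disc transfer $i\to k$, one move of the largest disc, and an $(n-1)$-disc transfer $k\to j$, so $u_n^{i,j}\leq u_{n-1}^{i,k}+u_{n-1}^{k,j}+1\leq 2U_{n-1}+1$; in the second it performs $(n-1)$-disc transfers $i\to j$, $j\to i$, $i\to j$ together with two moves of the largest disc, so $u_n^{i,j}\leq 2u_{n-1}^{i,j}+u_{n-1}^{j,i}+2\leq 3U_{n-1}+2$. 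Since $U_{n-1}\geq 0$, in either case $u_n^{i,j}\leq 3U_{n-1}+2$, whence $U_n\leq 3U_{n-1}+2$. With $U_0=0$ this linear recurrence yields $U_n\leq 3^n-1$, and therefore $u_n^{i,j}\leq 3^n-1$.

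The routine part is solving the recurrence $U_n\leq 3U_{n-1}+2$, $U_0=0$ (homogeneous-plus-particular solution $3^n-1$, checked against $U_1\leq 2$). The point that needs the most care is the interpretation of ``an optimal solution'': if some weights vanish, an optimal solution could contain redundant zero-cost moves and be arbitrarily long, so the upper bound should be read for the canonical optimal solution returned by \textsc{WTHD} — which, by the preceding corollary, is the minimum-move optimal solution — and the inductive estimate above applies verbatim to that solution. A second, minor subtlety is keeping the bookkeeping over the six ordered pairs consistent, which is exactly why passing to $U_n$ rather than staying with a single $u_n^{i,j}$ makes the induction go through cleanly.
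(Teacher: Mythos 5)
Your proof is correct, and for the upper bound it is more careful than the paper's own argument. The paper disposes of the proposition in one sentence by citing known extremal facts: the classical puzzle realizes the minimum $2^{n}-1$ and the linear variant the maximum $3^{n}-1$, and it then asserts that ``any solution'' has a move count between these two. Your lower bound is the same reduction to the classical optimality result (every legal WTH move sequence is a legal classical move sequence), reinforced by the standard $u_{n}\geq 2u_{n-1}+1$ induction. Your upper bound, however, replaces the citation with a self-contained induction on the recursion of Theorem \ref{theo1} (equivalently on \textsc{WTHD}), via $U_{n}\leq 3U_{n-1}+2$ with $U_{0}=0$ and $U_{n}=\max_{i\neq j}u_{n}^{i,j}$; this is in substance a proof of the fact the paper cites (the linear variant's $3^{n}-1$ comes from exactly this recurrence), but it has the advantage of being correctly scoped. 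Indeed, the paper's phrase ``any solution'' cannot be taken literally: legal move sequences can be arbitrarily long, and when some weights vanish even cost-optimal ones can be, so the bound really pertains to the solutions generated by the recursive decomposition, i.e.\ to the canonical minimum-move optimal solution returned by \textsc{WTHD} --- which is precisely how you read $u_{n}^{i,j}$. Passing to the worst case over the six ordered source--destination pairs is the right bookkeeping device for the permuted endpoints of the subproblems and makes the induction airtight.
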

\begin{proof}
It is known that the shortest and the longest solution in terms of number of moves are the solutions corresponding to  the classical Tower of Hanoi  and linear variant of the Tower of Hanoi respectively \cite{hinz2018tower}, so it suffices to notice that any solution of this problem must have a number of moves between the number of moves of these two solutions, which are $2^{n}-1$ and $3^{n}-1$ respectively.
\end{proof}

Let $v_{n}$ denote the number of sub-problems needed to be solved by algorithm \textsc{WTHD} to solve the weighted Tower of Hanoi of $n$ discs with two arbitrary source and destination pegs. According to relation \ref{theo1}, to calculate $C_{n}^{i,j}$ we need to calculate four sub-problems $C_{n-1}^{i,j}$, $C_{n-1}^{i,k}$, $C_{n-1}^{j,i}$ and $C_{n-1}^{k,j}$. Figure \ref{fig:tree} shows the structure of search tree when applying relation \ref{theo1}.
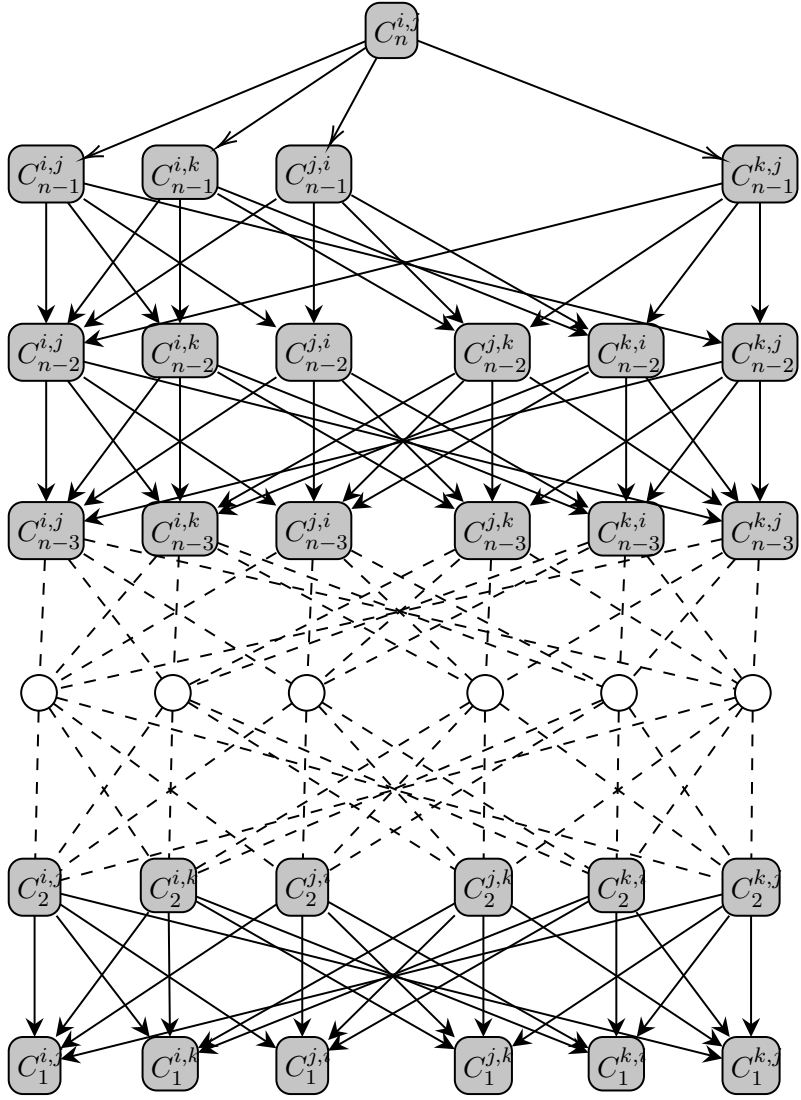
\begin{figure}
    \centering
\tikzset{every picture/.style={line width=0.75pt}} 

\begin{tikzpicture}[scale=0.9,x=0.75pt,y=0.75pt,yscale=-1,xscale=1]

\draw  [fill={rgb, 255:red, 197; green, 197; blue, 197 }  ,fill opacity=1 ]  (299,7) .. controls (299,2.58) and (302.58,-1) .. (307,-1) -- (320,-1) .. controls (324.42,-1) and (328,2.58) .. (328,7) -- (328,22) .. controls (328,26.42) and (324.42,30) .. (320,30) -- (307,30) .. controls (302.58,30) and (299,26.42) .. (299,22) -- cycle  ;
\draw (302,3.4) node [anchor=north west][inner sep=0.75pt]    {$C_{n}^{i,j}$};
\draw  [fill={rgb, 255:red, 197; green, 197; blue, 197 }  ,fill opacity=1 ]  (99,87) .. controls (99,82.58) and (102.58,79) .. (107,79) -- (133,79) .. controls (137.42,79) and (141,82.58) .. (141,87) -- (141,103) .. controls (141,107.42) and (137.42,111) .. (133,111) -- (107,111) .. controls (102.58,111) and (99,107.42) .. (99,103) -- cycle  ;
\draw (102,83.4) node [anchor=north west][inner sep=0.75pt]    {$C_{n-1}^{i,j}$};
\draw  [fill={rgb, 255:red, 197; green, 197; blue, 197 }  ,fill opacity=1 ]  (249,87) .. controls (249,82.58) and (252.58,79) .. (257,79) -- (283,79) .. controls (287.42,79) and (291,82.58) .. (291,87) -- (291,103) .. controls (291,107.42) and (287.42,111) .. (283,111) -- (257,111) .. controls (252.58,111) and (249,107.42) .. (249,103) -- cycle  ;
\draw (252,83.4) node [anchor=north west][inner sep=0.75pt]    {$C_{n-1}^{j,i}$};
\draw  [fill={rgb, 255:red, 197; green, 197; blue, 197 }  ,fill opacity=1 ]  (174,87) .. controls (174,82.58) and (177.58,79) .. (182,79) -- (208,79) .. controls (212.42,79) and (216,82.58) .. (216,87) -- (216,101) .. controls (216,105.42) and (212.42,109) .. (208,109) -- (182,109) .. controls (177.58,109) and (174,105.42) .. (174,101) -- cycle  ;
\draw (177,83.4) node [anchor=north west][inner sep=0.75pt]    {$C_{n-1}^{i,k}$};
\draw  [fill={rgb, 255:red, 197; green, 197; blue, 197 }  ,fill opacity=1 ]  (499,87) .. controls (499,82.58) and (502.58,79) .. (507,79) -- (533,79) .. controls (537.42,79) and (541,82.58) .. (541,87) -- (541,103) .. controls (541,107.42) and (537.42,111) .. (533,111) -- (507,111) .. controls (502.58,111) and (499,107.42) .. (499,103) -- cycle  ;
\draw (502,83.4) node [anchor=north west][inner sep=0.75pt]    {$C_{n-1}^{k,j}$};
\draw  [fill={rgb, 255:red, 197; green, 197; blue, 197 }  ,fill opacity=1 ]  (99,187) .. controls (99,182.58) and (102.58,179) .. (107,179) -- (133,179) .. controls (137.42,179) and (141,182.58) .. (141,187) -- (141,203) .. controls (141,207.42) and (137.42,211) .. (133,211) -- (107,211) .. controls (102.58,211) and (99,207.42) .. (99,203) -- cycle  ;
\draw (102,183.4) node [anchor=north west][inner sep=0.75pt]    {$C_{n-2}^{i,j}$};
\draw  [fill={rgb, 255:red, 197; green, 197; blue, 197 }  ,fill opacity=1 ]  (174,187) .. controls (174,182.58) and (177.58,179) .. (182,179) -- (208,179) .. controls (212.42,179) and (216,182.58) .. (216,187) -- (216,201) .. controls (216,205.42) and (212.42,209) .. (208,209) -- (182,209) .. controls (177.58,209) and (174,205.42) .. (174,201) -- cycle  ;
\draw (177,183.4) node [anchor=north west][inner sep=0.75pt]    {$C_{n-2}^{i,k}$};
\draw  [fill={rgb, 255:red, 197; green, 197; blue, 197 }  ,fill opacity=1 ]  (249,187) .. controls (249,182.58) and (252.58,179) .. (257,179) -- (283,179) .. controls (287.42,179) and (291,182.58) .. (291,187) -- (291,203) .. controls (291,207.42) and (287.42,211) .. (283,211) -- (257,211) .. controls (252.58,211) and (249,207.42) .. (249,203) -- cycle  ;
\draw (252,183.4) node [anchor=north west][inner sep=0.75pt]    {$C_{n-2}^{j,i}$};
\draw  [fill={rgb, 255:red, 197; green, 197; blue, 197 }  ,fill opacity=1 ]  (349,187) .. controls (349,182.58) and (352.58,179) .. (357,179) -- (383,179) .. controls (387.42,179) and (391,182.58) .. (391,187) -- (391,203) .. controls (391,207.42) and (387.42,211) .. (383,211) -- (357,211) .. controls (352.58,211) and (349,207.42) .. (349,203) -- cycle  ;
\draw (352,183.4) node [anchor=north west][inner sep=0.75pt]    {$C_{n-2}^{j,k}$};
\draw  [fill={rgb, 255:red, 197; green, 197; blue, 197 }  ,fill opacity=1 ]  (424,187) .. controls (424,182.58) and (427.58,179) .. (432,179) -- (458,179) .. controls (462.42,179) and (466,182.58) .. (466,187) -- (466,201) .. controls (466,205.42) and (462.42,209) .. (458,209) -- (432,209) .. controls (427.58,209) and (424,205.42) .. (424,201) -- cycle  ;
\draw (427,183.4) node [anchor=north west][inner sep=0.75pt]    {$C_{n-2}^{k,i}$};
\draw  [fill={rgb, 255:red, 197; green, 197; blue, 197 }  ,fill opacity=1 ]  (499,187) .. controls (499,182.58) and (502.58,179) .. (507,179) -- (533,179) .. controls (537.42,179) and (541,182.58) .. (541,187) -- (541,203) .. controls (541,207.42) and (537.42,211) .. (533,211) -- (507,211) .. controls (502.58,211) and (499,207.42) .. (499,203) -- cycle  ;
\draw (502,183.4) node [anchor=north west][inner sep=0.75pt]    {$C_{n-2}^{k,j}$};
\draw  [fill={rgb, 255:red, 197; green, 197; blue, 197 }  ,fill opacity=1 ]  (99,287) .. controls (99,282.58) and (102.58,279) .. (107,279) -- (133,279) .. controls (137.42,279) and (141,282.58) .. (141,287) -- (141,303) .. controls (141,307.42) and (137.42,311) .. (133,311) -- (107,311) .. controls (102.58,311) and (99,307.42) .. (99,303) -- cycle  ;
\draw (102,283.4) node [anchor=north west][inner sep=0.75pt]    {$C_{n-3}^{i,j}$};
\draw  [fill={rgb, 255:red, 197; green, 197; blue, 197 }  ,fill opacity=1 ]  (174,287) .. controls (174,282.58) and (177.58,279) .. (182,279) -- (208,279) .. controls (212.42,279) and (216,282.58) .. (216,287) -- (216,301) .. controls (216,305.42) and (212.42,309) .. (208,309) -- (182,309) .. controls (177.58,309) and (174,305.42) .. (174,301) -- cycle  ;
\draw (177,283.4) node [anchor=north west][inner sep=0.75pt]    {$C_{n-3}^{i,k}$};
\draw  [fill={rgb, 255:red, 197; green, 197; blue, 197 }  ,fill opacity=1 ]  (249,287) .. controls (249,282.58) and (252.58,279) .. (257,279) -- (283,279) .. controls (287.42,279) and (291,282.58) .. (291,287) -- (291,303) .. controls (291,307.42) and (287.42,311) .. (283,311) -- (257,311) .. controls (252.58,311) and (249,307.42) .. (249,303) -- cycle  ;
\draw (252,283.4) node [anchor=north west][inner sep=0.75pt]    {$C_{n-3}^{j,i}$};
\draw  [fill={rgb, 255:red, 197; green, 197; blue, 197 }  ,fill opacity=1 ]  (349,287) .. controls (349,282.58) and (352.58,279) .. (357,279) -- (383,279) .. controls (387.42,279) and (391,282.58) .. (391,287) -- (391,303) .. controls (391,307.42) and (387.42,311) .. (383,311) -- (357,311) .. controls (352.58,311) and (349,307.42) .. (349,303) -- cycle  ;
\draw (352,283.4) node [anchor=north west][inner sep=0.75pt]    {$C_{n-3}^{j,k}$};
\draw  [fill={rgb, 255:red, 197; green, 197; blue, 197 }  ,fill opacity=1 ]  (424,287) .. controls (424,282.58) and (427.58,279) .. (432,279) -- (458,279) .. controls (462.42,279) and (466,282.58) .. (466,287) -- (466,301) .. controls (466,305.42) and (462.42,309) .. (458,309) -- (432,309) .. controls (427.58,309) and (424,305.42) .. (424,301) -- cycle  ;
\draw (427,283.4) node [anchor=north west][inner sep=0.75pt]    {$C_{n-3}^{k,i}$};
\draw  [fill={rgb, 255:red, 197; green, 197; blue, 197 }  ,fill opacity=1 ]  (499,287) .. controls (499,282.58) and (502.58,279) .. (507,279) -- (533,279) .. controls (537.42,279) and (541,282.58) .. (541,287) -- (541,303) .. controls (541,307.42) and (537.42,311) .. (533,311) -- (507,311) .. controls (502.58,311) and (499,307.42) .. (499,303) -- cycle  ;
\draw (502,283.4) node [anchor=north west][inner sep=0.75pt]    {$C_{n-3}^{k,j}$};
\draw  [fill={rgb, 255:red, 197; green, 197; blue, 197 }  ,fill opacity=1 ]  (99,487) .. controls (99,482.58) and (102.58,479) .. (107,479) -- (120,479) .. controls (124.42,479) and (128,482.58) .. (128,487) -- (128,503) .. controls (128,507.42) and (124.42,511) .. (120,511) -- (107,511) .. controls (102.58,511) and (99,507.42) .. (99,503) -- cycle  ;
\draw (102,483.4) node [anchor=north west][inner sep=0.75pt]    {$C_{2}^{i,j}$};
\draw  [fill={rgb, 255:red, 197; green, 197; blue, 197 }  ,fill opacity=1 ]  (173,487) .. controls (173,482.58) and (176.58,479) .. (181,479) -- (196,479) .. controls (200.42,479) and (204,482.58) .. (204,487) -- (204,501) .. controls (204,505.42) and (200.42,509) .. (196,509) -- (181,509) .. controls (176.58,509) and (173,505.42) .. (173,501) -- cycle  ;
\draw (176,483.4) node [anchor=north west][inner sep=0.75pt]    {$C_{2}^{i,k}$};
\draw  [fill={rgb, 255:red, 197; green, 197; blue, 197 }  ,fill opacity=1 ]  (249,487) .. controls (249,482.58) and (252.58,479) .. (257,479) -- (270,479) .. controls (274.42,479) and (278,482.58) .. (278,487) -- (278,503) .. controls (278,507.42) and (274.42,511) .. (270,511) -- (257,511) .. controls (252.58,511) and (249,507.42) .. (249,503) -- cycle  ;
\draw (252,483.4) node [anchor=north west][inner sep=0.75pt]    {$C_{2}^{j,i}$};
\draw  [fill={rgb, 255:red, 197; green, 197; blue, 197 }  ,fill opacity=1 ]  (349,487) .. controls (349,482.58) and (352.58,479) .. (357,479) -- (373,479) .. controls (377.42,479) and (381,482.58) .. (381,487) -- (381,503) .. controls (381,507.42) and (377.42,511) .. (373,511) -- (357,511) .. controls (352.58,511) and (349,507.42) .. (349,503) -- cycle  ;
\draw (352,483.4) node [anchor=north west][inner sep=0.75pt]    {$C_{2}^{j,k}$};
\draw  [fill={rgb, 255:red, 197; green, 197; blue, 197 }  ,fill opacity=1 ]  (424,487) .. controls (424,482.58) and (427.58,479) .. (432,479) -- (447,479) .. controls (451.42,479) and (455,482.58) .. (455,487) -- (455,501) .. controls (455,505.42) and (451.42,509) .. (447,509) -- (432,509) .. controls (427.58,509) and (424,505.42) .. (424,501) -- cycle  ;
\draw (427,483.4) node [anchor=north west][inner sep=0.75pt]    {$C_{2}^{k,i}$};
\draw  [fill={rgb, 255:red, 197; green, 197; blue, 197 }  ,fill opacity=1 ]  (499,487) .. controls (499,482.58) and (502.58,479) .. (507,479) -- (523,479) .. controls (527.42,479) and (531,482.58) .. (531,487) -- (531,503) .. controls (531,507.42) and (527.42,511) .. (523,511) -- (507,511) .. controls (502.58,511) and (499,507.42) .. (499,503) -- cycle  ;
\draw (502,483.4) node [anchor=north west][inner sep=0.75pt]    {$C_{2}^{k,j}$};
\draw  [fill={rgb, 255:red, 197; green, 197; blue, 197 }  ,fill opacity=1 ]  (99,587) .. controls (99,582.58) and (102.58,579) .. (107,579) -- (120,579) .. controls (124.42,579) and (128,582.58) .. (128,587) -- (128,603) .. controls (128,607.42) and (124.42,611) .. (120,611) -- (107,611) .. controls (102.58,611) and (99,607.42) .. (99,603) -- cycle  ;
\draw (102,583.4) node [anchor=north west][inner sep=0.75pt]    {$C_{1}^{i,j}$};
\draw  [fill={rgb, 255:red, 197; green, 197; blue, 197 }  ,fill opacity=1 ]  (174,587) .. controls (174,582.58) and (177.58,579) .. (182,579) -- (197,579) .. controls (201.42,579) and (205,582.58) .. (205,587) -- (205,601) .. controls (205,605.42) and (201.42,609) .. (197,609) -- (182,609) .. controls (177.58,609) and (174,605.42) .. (174,601) -- cycle  ;
\draw (177,583.4) node [anchor=north west][inner sep=0.75pt]    {$C_{1}^{i,k}$};
\draw  [fill={rgb, 255:red, 197; green, 197; blue, 197 }  ,fill opacity=1 ]  (249,587) .. controls (249,582.58) and (252.58,579) .. (257,579) -- (270,579) .. controls (274.42,579) and (278,582.58) .. (278,587) -- (278,603) .. controls (278,607.42) and (274.42,611) .. (270,611) -- (257,611) .. controls (252.58,611) and (249,607.42) .. (249,603) -- cycle  ;
\draw (252,583.4) node [anchor=north west][inner sep=0.75pt]    {$C_{1}^{j,i}$};
\draw  [fill={rgb, 255:red, 197; green, 197; blue, 197 }  ,fill opacity=1 ]  (349,587) .. controls (349,582.58) and (352.58,579) .. (357,579) -- (373,579) .. controls (377.42,579) and (381,582.58) .. (381,587) -- (381,603) .. controls (381,607.42) and (377.42,611) .. (373,611) -- (357,611) .. controls (352.58,611) and (349,607.42) .. (349,603) -- cycle  ;
\draw (352,583.4) node [anchor=north west][inner sep=0.75pt]    {$C_{1}^{j,k}$};
\draw  [fill={rgb, 255:red, 197; green, 197; blue, 197 }  ,fill opacity=1 ]  (424,587) .. controls (424,582.58) and (427.58,579) .. (432,579) -- (447,579) .. controls (451.42,579) and (455,582.58) .. (455,587) -- (455,601) .. controls (455,605.42) and (451.42,609) .. (447,609) -- (432,609) .. controls (427.58,609) and (424,605.42) .. (424,601) -- cycle  ;
\draw (427,583.4) node [anchor=north west][inner sep=0.75pt]    {$C_{1}^{k,i}$};
\draw  [fill={rgb, 255:red, 197; green, 197; blue, 197 }  ,fill opacity=1 ]  (499,587) .. controls (499,582.58) and (502.58,579) .. (507,579) -- (523,579) .. controls (527.42,579) and (531,582.58) .. (531,587) -- (531,603) .. controls (531,607.42) and (527.42,611) .. (523,611) -- (507,611) .. controls (502.58,611) and (499,607.42) .. (499,603) -- cycle  ;
\draw (502,583.4) node [anchor=north west][inner sep=0.75pt]    {$C_{1}^{k,j}$};
\draw    (116, 386) circle [x radius= 10, y radius= 10]   ;
\draw (110,383.4) node [anchor=north west][inner sep=0.75pt]  [font=\tiny]  {$$};
\draw    (191, 386) circle [x radius= 10, y radius= 10]   ;
\draw (185,383.4) node [anchor=north west][inner sep=0.75pt]  [font=\tiny]  {$$};
\draw    (266, 386) circle [x radius= 10, y radius= 10]   ;
\draw (260,383.4) node [anchor=north west][inner sep=0.75pt]  [font=\tiny]  {$$};
\draw    (366, 386) circle [x radius= 10, y radius= 10]   ;
\draw (360,383.4) node [anchor=north west][inner sep=0.75pt]  [font=\tiny]  {$$};
\draw    (441, 386) circle [x radius= 10, y radius= 10]   ;
\draw (435,383.4) node [anchor=north west][inner sep=0.75pt]  [font=\tiny]  {$$};
\draw    (516, 386) circle [x radius= 10, y radius= 10]   ;
\draw (510,383.4) node [anchor=north west][inner sep=0.75pt]  [font=\tiny]  {$$};
\draw    (299,20.53) -- (142.85,85.5) ;
\draw [shift={(141,86.26)}, rotate = 337.41] [color={rgb, 255:red, 0; green, 0; blue, 0 }  ][line width=0.75]    (10.93,-3.29) .. controls (6.95,-1.4) and (3.31,-0.3) .. (0,0) .. controls (3.31,0.3) and (6.95,1.4) .. (10.93,3.29)   ;
\draw    (305.12,30) -- (279.6,77.24) ;
\draw [shift={(278.65,79)}, rotate = 298.39] [color={rgb, 255:red, 0; green, 0; blue, 0 }  ][line width=0.75]    (10.93,-3.29) .. controls (6.95,-1.4) and (3.31,-0.3) .. (0,0) .. controls (3.31,0.3) and (6.95,1.4) .. (10.93,3.29)   ;
\draw    (299,24.23) -- (217.66,78.8) ;
\draw [shift={(216,79.91)}, rotate = 326.14] [color={rgb, 255:red, 0; green, 0; blue, 0 }  ][line width=0.75]    (10.93,-3.29) .. controls (6.95,-1.4) and (3.31,-0.3) .. (0,0) .. controls (3.31,0.3) and (6.95,1.4) .. (10.93,3.29)   ;
\draw    (328,20.15) -- (497.14,86.09) ;
\draw [shift={(499,86.81)}, rotate = 201.3] [color={rgb, 255:red, 0; green, 0; blue, 0 }  ][line width=0.75]    (10.93,-3.29) .. controls (6.95,-1.4) and (3.31,-0.3) .. (0,0) .. controls (3.31,0.3) and (6.95,1.4) .. (10.93,3.29)   ;
\draw    (120,111) -- (120,176) ;
\draw [shift={(120,179)}, rotate = 270] [fill={rgb, 255:red, 0; green, 0; blue, 0 }  ][line width=0.08]  [draw opacity=0] (10.72,-5.15) -- (0,0) -- (10.72,5.15) -- (7.12,0) -- cycle    ;
\draw    (141,109) -- (246.5,179.34) ;
\draw [shift={(249,181)}, rotate = 213.69] [fill={rgb, 255:red, 0; green, 0; blue, 0 }  ][line width=0.08]  [draw opacity=0] (10.72,-5.15) -- (0,0) -- (10.72,5.15) -- (7.12,0) -- cycle    ;
\draw    (132.12,111) -- (181.82,176.61) ;
\draw [shift={(183.64,179)}, rotate = 232.85] [fill={rgb, 255:red, 0; green, 0; blue, 0 }  ][line width=0.08]  [draw opacity=0] (10.72,-5.15) -- (0,0) -- (10.72,5.15) -- (7.12,0) -- cycle    ;
\draw    (141,100.25) -- (496.09,189.02) ;
\draw [shift={(499,189.75)}, rotate = 194.04] [fill={rgb, 255:red, 0; green, 0; blue, 0 }  ][line width=0.08]  [draw opacity=0] (10.72,-5.15) -- (0,0) -- (10.72,5.15) -- (7.12,0) -- cycle    ;
\draw    (270,111) -- (270,176) ;
\draw [shift={(270,179)}, rotate = 270] [fill={rgb, 255:red, 0; green, 0; blue, 0 }  ][line width=0.08]  [draw opacity=0] (10.72,-5.15) -- (0,0) -- (10.72,5.15) -- (7.12,0) -- cycle    ;
\draw    (249,109) -- (143.5,179.34) ;
\draw [shift={(141,181)}, rotate = 326.31] [fill={rgb, 255:red, 0; green, 0; blue, 0 }  ][line width=0.08]  [draw opacity=0] (10.72,-5.15) -- (0,0) -- (10.72,5.15) -- (7.12,0) -- cycle    ;
\draw    (286,111) -- (351.88,176.88) ;
\draw [shift={(354,179)}, rotate = 225] [fill={rgb, 255:red, 0; green, 0; blue, 0 }  ][line width=0.08]  [draw opacity=0] (10.72,-5.15) -- (0,0) -- (10.72,5.15) -- (7.12,0) -- cycle    ;
\draw    (291,106.88) -- (421.39,180.64) ;
\draw [shift={(424,182.12)}, rotate = 209.5] [fill={rgb, 255:red, 0; green, 0; blue, 0 }  ][line width=0.08]  [draw opacity=0] (10.72,-5.15) -- (0,0) -- (10.72,5.15) -- (7.12,0) -- cycle    ;
\draw    (195,109) -- (195,176) ;
\draw [shift={(195,179)}, rotate = 270] [fill={rgb, 255:red, 0; green, 0; blue, 0 }  ][line width=0.08]  [draw opacity=0] (10.72,-5.15) -- (0,0) -- (10.72,5.15) -- (7.12,0) -- cycle    ;
\draw    (216,102.4) -- (421.21,184.49) ;
\draw [shift={(424,185.6)}, rotate = 201.8] [fill={rgb, 255:red, 0; green, 0; blue, 0 }  ][line width=0.08]  [draw opacity=0] (10.72,-5.15) -- (0,0) -- (10.72,5.15) -- (7.12,0) -- cycle    ;
\draw    (183.86,109) -- (133.67,176.59) ;
\draw [shift={(131.88,179)}, rotate = 306.6] [fill={rgb, 255:red, 0; green, 0; blue, 0 }  ][line width=0.08]  [draw opacity=0] (10.72,-5.15) -- (0,0) -- (10.72,5.15) -- (7.12,0) -- cycle    ;
\draw    (216,106.12) -- (346.4,181.38) ;
\draw [shift={(349,182.88)}, rotate = 209.99] [fill={rgb, 255:red, 0; green, 0; blue, 0 }  ][line width=0.08]  [draw opacity=0] (10.72,-5.15) -- (0,0) -- (10.72,5.15) -- (7.12,0) -- cycle    ;
\draw    (520,111) -- (520,176) ;
\draw [shift={(520,179)}, rotate = 270] [fill={rgb, 255:red, 0; green, 0; blue, 0 }  ][line width=0.08]  [draw opacity=0] (10.72,-5.15) -- (0,0) -- (10.72,5.15) -- (7.12,0) -- cycle    ;
\draw    (499,109) -- (393.5,179.34) ;
\draw [shift={(391,181)}, rotate = 326.31] [fill={rgb, 255:red, 0; green, 0; blue, 0 }  ][line width=0.08]  [draw opacity=0] (10.72,-5.15) -- (0,0) -- (10.72,5.15) -- (7.12,0) -- cycle    ;
\draw    (507.88,111) -- (458.18,176.61) ;
\draw [shift={(456.36,179)}, rotate = 307.15] [fill={rgb, 255:red, 0; green, 0; blue, 0 }  ][line width=0.08]  [draw opacity=0] (10.72,-5.15) -- (0,0) -- (10.72,5.15) -- (7.12,0) -- cycle    ;
\draw    (499,100.25) -- (143.91,189.02) ;
\draw [shift={(141,189.75)}, rotate = 345.96] [fill={rgb, 255:red, 0; green, 0; blue, 0 }  ][line width=0.08]  [draw opacity=0] (10.72,-5.15) -- (0,0) -- (10.72,5.15) -- (7.12,0) -- cycle    ;
\draw    (120,211) -- (120,276) ;
\draw [shift={(120,279)}, rotate = 270] [fill={rgb, 255:red, 0; green, 0; blue, 0 }  ][line width=0.08]  [draw opacity=0] (10.72,-5.15) -- (0,0) -- (10.72,5.15) -- (7.12,0) -- cycle    ;
\draw    (141,209) -- (246.5,279.34) ;
\draw [shift={(249,281)}, rotate = 213.69] [fill={rgb, 255:red, 0; green, 0; blue, 0 }  ][line width=0.08]  [draw opacity=0] (10.72,-5.15) -- (0,0) -- (10.72,5.15) -- (7.12,0) -- cycle    ;
\draw    (132.12,211) -- (181.82,276.61) ;
\draw [shift={(183.64,279)}, rotate = 232.85] [fill={rgb, 255:red, 0; green, 0; blue, 0 }  ][line width=0.08]  [draw opacity=0] (10.72,-5.15) -- (0,0) -- (10.72,5.15) -- (7.12,0) -- cycle    ;
\draw    (141,200.25) -- (496.09,289.02) ;
\draw [shift={(499,289.75)}, rotate = 194.04] [fill={rgb, 255:red, 0; green, 0; blue, 0 }  ][line width=0.08]  [draw opacity=0] (10.72,-5.15) -- (0,0) -- (10.72,5.15) -- (7.12,0) -- cycle    ;
\draw    (195,209) -- (195,276) ;
\draw [shift={(195,279)}, rotate = 270] [fill={rgb, 255:red, 0; green, 0; blue, 0 }  ][line width=0.08]  [draw opacity=0] (10.72,-5.15) -- (0,0) -- (10.72,5.15) -- (7.12,0) -- cycle    ;
\draw    (216,202.4) -- (421.21,284.49) ;
\draw [shift={(424,285.6)}, rotate = 201.8] [fill={rgb, 255:red, 0; green, 0; blue, 0 }  ][line width=0.08]  [draw opacity=0] (10.72,-5.15) -- (0,0) -- (10.72,5.15) -- (7.12,0) -- cycle    ;
\draw    (183.86,209) -- (133.67,276.59) ;
\draw [shift={(131.88,279)}, rotate = 306.6] [fill={rgb, 255:red, 0; green, 0; blue, 0 }  ][line width=0.08]  [draw opacity=0] (10.72,-5.15) -- (0,0) -- (10.72,5.15) -- (7.12,0) -- cycle    ;
\draw    (216,206.12) -- (346.4,281.38) ;
\draw [shift={(349,282.88)}, rotate = 209.99] [fill={rgb, 255:red, 0; green, 0; blue, 0 }  ][line width=0.08]  [draw opacity=0] (10.72,-5.15) -- (0,0) -- (10.72,5.15) -- (7.12,0) -- cycle    ;
\draw    (270,211) -- (270,276) ;
\draw [shift={(270,279)}, rotate = 270] [fill={rgb, 255:red, 0; green, 0; blue, 0 }  ][line width=0.08]  [draw opacity=0] (10.72,-5.15) -- (0,0) -- (10.72,5.15) -- (7.12,0) -- cycle    ;
\draw    (249,209) -- (143.5,279.34) ;
\draw [shift={(141,281)}, rotate = 326.31] [fill={rgb, 255:red, 0; green, 0; blue, 0 }  ][line width=0.08]  [draw opacity=0] (10.72,-5.15) -- (0,0) -- (10.72,5.15) -- (7.12,0) -- cycle    ;
\draw    (286,211) -- (351.88,276.88) ;
\draw [shift={(354,279)}, rotate = 225] [fill={rgb, 255:red, 0; green, 0; blue, 0 }  ][line width=0.08]  [draw opacity=0] (10.72,-5.15) -- (0,0) -- (10.72,5.15) -- (7.12,0) -- cycle    ;
\draw    (291,206.88) -- (421.39,280.64) ;
\draw [shift={(424,282.12)}, rotate = 209.5] [fill={rgb, 255:red, 0; green, 0; blue, 0 }  ][line width=0.08]  [draw opacity=0] (10.72,-5.15) -- (0,0) -- (10.72,5.15) -- (7.12,0) -- cycle    ;
\draw    (370,211) -- (370,276) ;
\draw [shift={(370,279)}, rotate = 270] [fill={rgb, 255:red, 0; green, 0; blue, 0 }  ][line width=0.08]  [draw opacity=0] (10.72,-5.15) -- (0,0) -- (10.72,5.15) -- (7.12,0) -- cycle    ;
\draw    (391,209) -- (496.5,279.34) ;
\draw [shift={(499,281)}, rotate = 213.69] [fill={rgb, 255:red, 0; green, 0; blue, 0 }  ][line width=0.08]  [draw opacity=0] (10.72,-5.15) -- (0,0) -- (10.72,5.15) -- (7.12,0) -- cycle    ;
\draw    (354,211) -- (288.12,276.88) ;
\draw [shift={(286,279)}, rotate = 315] [fill={rgb, 255:red, 0; green, 0; blue, 0 }  ][line width=0.08]  [draw opacity=0] (10.72,-5.15) -- (0,0) -- (10.72,5.15) -- (7.12,0) -- cycle    ;
\draw    (349,206.88) -- (218.61,280.64) ;
\draw [shift={(216,282.12)}, rotate = 330.5] [fill={rgb, 255:red, 0; green, 0; blue, 0 }  ][line width=0.08]  [draw opacity=0] (10.72,-5.15) -- (0,0) -- (10.72,5.15) -- (7.12,0) -- cycle    ;
\draw    (445,209) -- (445,276) ;
\draw [shift={(445,279)}, rotate = 270] [fill={rgb, 255:red, 0; green, 0; blue, 0 }  ][line width=0.08]  [draw opacity=0] (10.72,-5.15) -- (0,0) -- (10.72,5.15) -- (7.12,0) -- cycle    ;
\draw    (424,202.4) -- (218.79,284.49) ;
\draw [shift={(216,285.6)}, rotate = 338.2] [fill={rgb, 255:red, 0; green, 0; blue, 0 }  ][line width=0.08]  [draw opacity=0] (10.72,-5.15) -- (0,0) -- (10.72,5.15) -- (7.12,0) -- cycle    ;
\draw    (456.14,209) -- (506.33,276.59) ;
\draw [shift={(508.12,279)}, rotate = 233.4] [fill={rgb, 255:red, 0; green, 0; blue, 0 }  ][line width=0.08]  [draw opacity=0] (10.72,-5.15) -- (0,0) -- (10.72,5.15) -- (7.12,0) -- cycle    ;
\draw    (424,206.12) -- (293.6,281.38) ;
\draw [shift={(291,282.88)}, rotate = 330.01] [fill={rgb, 255:red, 0; green, 0; blue, 0 }  ][line width=0.08]  [draw opacity=0] (10.72,-5.15) -- (0,0) -- (10.72,5.15) -- (7.12,0) -- cycle    ;
\draw    (520,211) -- (520,276) ;
\draw [shift={(520,279)}, rotate = 270] [fill={rgb, 255:red, 0; green, 0; blue, 0 }  ][line width=0.08]  [draw opacity=0] (10.72,-5.15) -- (0,0) -- (10.72,5.15) -- (7.12,0) -- cycle    ;
\draw    (499,209) -- (393.5,279.34) ;
\draw [shift={(391,281)}, rotate = 326.31] [fill={rgb, 255:red, 0; green, 0; blue, 0 }  ][line width=0.08]  [draw opacity=0] (10.72,-5.15) -- (0,0) -- (10.72,5.15) -- (7.12,0) -- cycle    ;
\draw    (507.88,211) -- (458.18,276.61) ;
\draw [shift={(456.36,279)}, rotate = 307.15] [fill={rgb, 255:red, 0; green, 0; blue, 0 }  ][line width=0.08]  [draw opacity=0] (10.72,-5.15) -- (0,0) -- (10.72,5.15) -- (7.12,0) -- cycle    ;
\draw    (499,200.25) -- (143.91,289.02) ;
\draw [shift={(141,289.75)}, rotate = 345.96] [fill={rgb, 255:red, 0; green, 0; blue, 0 }  ][line width=0.08]  [draw opacity=0] (10.72,-5.15) -- (0,0) -- (10.72,5.15) -- (7.12,0) -- cycle    ;
\draw    (113.5,511) -- (113.5,576) ;
\draw [shift={(113.5,579)}, rotate = 270] [fill={rgb, 255:red, 0; green, 0; blue, 0 }  ][line width=0.08]  [draw opacity=0] (10.72,-5.15) -- (0,0) -- (10.72,5.15) -- (7.12,0) -- cycle    ;
\draw    (128,498.61) -- (496.09,590.29) ;
\draw [shift={(499,591.01)}, rotate = 193.99] [fill={rgb, 255:red, 0; green, 0; blue, 0 }  ][line width=0.08]  [draw opacity=0] (10.72,-5.15) -- (0,0) -- (10.72,5.15) -- (7.12,0) -- cycle    ;
\draw    (125.78,511) -- (176.16,576.62) ;
\draw [shift={(177.98,579)}, rotate = 232.49] [fill={rgb, 255:red, 0; green, 0; blue, 0 }  ][line width=0.08]  [draw opacity=0] (10.72,-5.15) -- (0,0) -- (10.72,5.15) -- (7.12,0) -- cycle    ;
\draw    (128,504.67) -- (246.5,583.67) ;
\draw [shift={(249,585.33)}, rotate = 213.69] [fill={rgb, 255:red, 0; green, 0; blue, 0 }  ][line width=0.08]  [draw opacity=0] (10.72,-5.15) -- (0,0) -- (10.72,5.15) -- (7.12,0) -- cycle    ;
\draw    (188.65,509) -- (189.32,576) ;
\draw [shift={(189.35,579)}, rotate = 269.43] [fill={rgb, 255:red, 0; green, 0; blue, 0 }  ][line width=0.08]  [draw opacity=0] (10.72,-5.15) -- (0,0) -- (10.72,5.15) -- (7.12,0) -- cycle    ;
\draw    (204,500.18) -- (421.21,586.71) ;
\draw [shift={(424,587.82)}, rotate = 201.72] [fill={rgb, 255:red, 0; green, 0; blue, 0 }  ][line width=0.08]  [draw opacity=0] (10.72,-5.15) -- (0,0) -- (10.72,5.15) -- (7.12,0) -- cycle    ;
\draw    (177.36,509) -- (127.17,576.59) ;
\draw [shift={(125.38,579)}, rotate = 306.6] [fill={rgb, 255:red, 0; green, 0; blue, 0 }  ][line width=0.08]  [draw opacity=0] (10.72,-5.15) -- (0,0) -- (10.72,5.15) -- (7.12,0) -- cycle    ;
\draw    (204,502.87) -- (346.4,584.35) ;
\draw [shift={(349,585.84)}, rotate = 209.78] [fill={rgb, 255:red, 0; green, 0; blue, 0 }  ][line width=0.08]  [draw opacity=0] (10.72,-5.15) -- (0,0) -- (10.72,5.15) -- (7.12,0) -- cycle    ;
\draw    (263.5,511) -- (263.5,576) ;
\draw [shift={(263.5,579)}, rotate = 270] [fill={rgb, 255:red, 0; green, 0; blue, 0 }  ][line width=0.08]  [draw opacity=0] (10.72,-5.15) -- (0,0) -- (10.72,5.15) -- (7.12,0) -- cycle    ;
\draw    (249,504.67) -- (130.5,583.67) ;
\draw [shift={(128,585.33)}, rotate = 326.31] [fill={rgb, 255:red, 0; green, 0; blue, 0 }  ][line width=0.08]  [draw opacity=0] (10.72,-5.15) -- (0,0) -- (10.72,5.15) -- (7.12,0) -- cycle    ;
\draw    (278,509.29) -- (346.86,577.13) ;
\draw [shift={(349,579.24)}, rotate = 224.57] [fill={rgb, 255:red, 0; green, 0; blue, 0 }  ][line width=0.08]  [draw opacity=0] (10.72,-5.15) -- (0,0) -- (10.72,5.15) -- (7.12,0) -- cycle    ;
\draw    (278,503.16) -- (421.39,583.81) ;
\draw [shift={(424,585.28)}, rotate = 209.36] [fill={rgb, 255:red, 0; green, 0; blue, 0 }  ][line width=0.08]  [draw opacity=0] (10.72,-5.15) -- (0,0) -- (10.72,5.15) -- (7.12,0) -- cycle    ;
\draw    (365,511) -- (365,576) ;
\draw [shift={(365,579)}, rotate = 270] [fill={rgb, 255:red, 0; green, 0; blue, 0 }  ][line width=0.08]  [draw opacity=0] (10.72,-5.15) -- (0,0) -- (10.72,5.15) -- (7.12,0) -- cycle    ;
\draw    (381,505.67) -- (496.5,582.67) ;
\draw [shift={(499,584.33)}, rotate = 213.69] [fill={rgb, 255:red, 0; green, 0; blue, 0 }  ][line width=0.08]  [draw opacity=0] (10.72,-5.15) -- (0,0) -- (10.72,5.15) -- (7.12,0) -- cycle    ;
\draw    (349,510.76) -- (280.14,578.61) ;
\draw [shift={(278,580.71)}, rotate = 315.43] [fill={rgb, 255:red, 0; green, 0; blue, 0 }  ][line width=0.08]  [draw opacity=0] (10.72,-5.15) -- (0,0) -- (10.72,5.15) -- (7.12,0) -- cycle    ;
\draw    (349,504.03) -- (207.61,583.78) ;
\draw [shift={(205,585.26)}, rotate = 330.57] [fill={rgb, 255:red, 0; green, 0; blue, 0 }  ][line width=0.08]  [draw opacity=0] (10.72,-5.15) -- (0,0) -- (10.72,5.15) -- (7.12,0) -- cycle    ;
\draw    (439.5,509) -- (439.5,576) ;
\draw [shift={(439.5,579)}, rotate = 270] [fill={rgb, 255:red, 0; green, 0; blue, 0 }  ][line width=0.08]  [draw opacity=0] (10.72,-5.15) -- (0,0) -- (10.72,5.15) -- (7.12,0) -- cycle    ;
\draw    (424,500.2) -- (207.79,586.69) ;
\draw [shift={(205,587.8)}, rotate = 338.2] [fill={rgb, 255:red, 0; green, 0; blue, 0 }  ][line width=0.08]  [draw opacity=0] (10.72,-5.15) -- (0,0) -- (10.72,5.15) -- (7.12,0) -- cycle    ;
\draw    (450.71,509) -- (501.24,576.6) ;
\draw [shift={(503.04,579)}, rotate = 233.22] [fill={rgb, 255:red, 0; green, 0; blue, 0 }  ][line width=0.08]  [draw opacity=0] (10.72,-5.15) -- (0,0) -- (10.72,5.15) -- (7.12,0) -- cycle    ;
\draw    (424,502.89) -- (280.6,585.19) ;
\draw [shift={(278,586.68)}, rotate = 330.15] [fill={rgb, 255:red, 0; green, 0; blue, 0 }  ][line width=0.08]  [draw opacity=0] (10.72,-5.15) -- (0,0) -- (10.72,5.15) -- (7.12,0) -- cycle    ;
\draw    (515,511) -- (515,576) ;
\draw [shift={(515,579)}, rotate = 270] [fill={rgb, 255:red, 0; green, 0; blue, 0 }  ][line width=0.08]  [draw opacity=0] (10.72,-5.15) -- (0,0) -- (10.72,5.15) -- (7.12,0) -- cycle    ;
\draw    (499,505.67) -- (383.5,582.67) ;
\draw [shift={(381,584.33)}, rotate = 326.31] [fill={rgb, 255:red, 0; green, 0; blue, 0 }  ][line width=0.08]  [draw opacity=0] (10.72,-5.15) -- (0,0) -- (10.72,5.15) -- (7.12,0) -- cycle    ;
\draw    (502.8,511) -- (452.76,576.61) ;
\draw [shift={(450.94,579)}, rotate = 307.33] [fill={rgb, 255:red, 0; green, 0; blue, 0 }  ][line width=0.08]  [draw opacity=0] (10.72,-5.15) -- (0,0) -- (10.72,5.15) -- (7.12,0) -- cycle    ;
\draw    (499,498.99) -- (130.91,590.66) ;
\draw [shift={(128,591.39)}, rotate = 346.01] [fill={rgb, 255:red, 0; green, 0; blue, 0 }  ][line width=0.08]  [draw opacity=0] (10.72,-5.15) -- (0,0) -- (10.72,5.15) -- (7.12,0) -- cycle    ;
\draw  [dash pattern={on 4.5pt off 4.5pt}]  (119.3,311) -- (116.44,376.01) ;
\draw  [dash pattern={on 4.5pt off 4.5pt}]  (141,308.09) -- (257.51,380.71) ;
\draw  [dash pattern={on 4.5pt off 4.5pt}]  (132.48,311) -- (184.85,378.12) ;
\draw  [dash pattern={on 4.5pt off 4.5pt}]  (141,299.83) -- (506.25,383.76) ;
\draw  [dash pattern={on 4.5pt off 4.5pt}]  (194.35,309) -- (191.43,376.01) ;
\draw  [dash pattern={on 4.5pt off 4.5pt}]  (216,301.85) -- (431.63,382.5) ;
\draw  [dash pattern={on 4.5pt off 4.5pt}]  (182.12,309) -- (122.51,378.41) ;
\draw  [dash pattern={on 4.5pt off 4.5pt}]  (216,305.3) -- (357.19,381.26) ;
\draw  [dash pattern={on 4.5pt off 4.5pt}]  (269.3,311) -- (266.44,376.01) ;
\draw  [dash pattern={on 4.5pt off 4.5pt}]  (249,307.41) -- (124.61,380.91) ;
\draw  [dash pattern={on 4.5pt off 4.5pt}]  (286.88,311) -- (358.74,379.12) ;
\draw  [dash pattern={on 4.5pt off 4.5pt}]  (291,306.18) -- (432.17,381.3) ;
\draw  [dash pattern={on 4.5pt off 4.5pt}]  (369.3,311) -- (366.44,376.01) ;
\draw  [dash pattern={on 4.5pt off 4.5pt}]  (391,308.09) -- (507.51,380.71) ;
\draw  [dash pattern={on 4.5pt off 4.5pt}]  (351.71,311) -- (273.53,379.41) ;
\draw  [dash pattern={on 4.5pt off 4.5pt}]  (349,305.68) -- (199.92,381.47) ;
\draw  [dash pattern={on 4.5pt off 4.5pt}]  (444.35,309) -- (441.43,376.01) ;
\draw  [dash pattern={on 4.5pt off 4.5pt}]  (424,301.61) -- (200.4,382.59) ;
\draw  [dash pattern={on 4.5pt off 4.5pt}]  (456.58,309) -- (509.89,378.08) ;
\draw  [dash pattern={on 4.5pt off 4.5pt}]  (424,304.79) -- (274.9,381.43) ;
\draw  [dash pattern={on 4.5pt off 4.5pt}]  (519.3,311) -- (516.44,376.01) ;
\draw  [dash pattern={on 4.5pt off 4.5pt}]  (499,307.41) -- (374.61,380.91) ;
\draw  [dash pattern={on 4.5pt off 4.5pt}]  (506.11,311) -- (447.56,378.45) ;
\draw  [dash pattern={on 4.5pt off 4.5pt}]  (499,299.73) -- (125.76,383.8) ;
\draw  [dash pattern={on 4.5pt off 4.5pt}]  (115.77,396) -- (113.87,479) ;
\draw  [dash pattern={on 4.5pt off 4.5pt}]  (124.04,391.94) -- (249,484.28) ;
\draw  [dash pattern={on 4.5pt off 4.5pt}]  (121.57,394.3) -- (178.43,479) ;
\draw  [dash pattern={on 4.5pt off 4.5pt}]  (125.65,388.64) -- (499,490.63) ;
\draw  [dash pattern={on 4.5pt off 4.5pt}]  (190.77,396) -- (188.85,479) ;
\draw  [dash pattern={on 4.5pt off 4.5pt}]  (200.17,389.99) -- (424,487.26) ;
\draw  [dash pattern={on 4.5pt off 4.5pt}]  (185.2,394.15) -- (124.88,479) ;
\draw  [dash pattern={on 4.5pt off 4.5pt}]  (199.48,391.31) -- (349,484.98) ;
\draw  [dash pattern={on 4.5pt off 4.5pt}]  (265.77,396) -- (263.87,479) ;
\draw  [dash pattern={on 4.5pt off 4.5pt}]  (257.86,391.82) -- (128,484.64) ;
\draw  [dash pattern={on 4.5pt off 4.5pt}]  (272.72,393.4) -- (350.47,479) ;
\draw  [dash pattern={on 4.5pt off 4.5pt}]  (274.49,391.29) -- (424,484.35) ;
\draw  [dash pattern={on 4.5pt off 4.5pt}]  (365.91,396) -- (365.15,479) ;
\draw  [dash pattern={on 4.5pt off 4.5pt}]  (374.07,391.9) -- (499,483.3) ;
\draw  [dash pattern={on 4.5pt off 4.5pt}]  (359.15,393.28) -- (278,479.58) ;
\draw  [dash pattern={on 4.5pt off 4.5pt}]  (357.46,391.2) -- (204,484.57) ;
\draw  [dash pattern={on 4.5pt off 4.5pt}]  (440.86,396) -- (439.71,479) ;
\draw  [dash pattern={on 4.5pt off 4.5pt}]  (431.8,389.93) -- (204,487.37) ;
\draw  [dash pattern={on 4.5pt off 4.5pt}]  (446.62,394.27) -- (504.14,479) ;
\draw  [dash pattern={on 4.5pt off 4.5pt}]  (432.48,391.23) -- (278,486.1) ;
\draw  [dash pattern={on 4.5pt off 4.5pt}]  (515.91,396) -- (515.15,479) ;
\draw  [dash pattern={on 4.5pt off 4.5pt}]  (507.89,391.85) -- (381,483.45) ;
\draw  [dash pattern={on 4.5pt off 4.5pt}]  (510.22,394.16) -- (450.13,479) ;
\draw  [dash pattern={on 4.5pt off 4.5pt}]  (506.35,388.61) -- (128,491.07) ;

\end{tikzpicture}

    \caption{The structure of the search tree.}
    \label{fig:tree}
\end{figure}

\begin{proposition}\label{propo1}
For all $n\geq 3$ we have 
\begin{equation}
    v_{n}=6^{n-2}+4,
\end{equation}
and $v_{0}=0$, $v_{1}=1$ and $v_{2}=4$.
\end{proposition}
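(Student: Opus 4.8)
The plan is to prove the identity by induction on $n$, the inductive step being supplied by a careful structural analysis of the search tree depicted in Figure~\ref{fig:tree}; the three listed special values $v_{0}=0$, $v_{1}=1$, $v_{2}=4$ serve as base cases and are read off directly, since a $0$- or $1$-disc instance is terminal and a $2$-disc instance only descends to level $1$, where exactly the four subproblems $C_{1}^{i,k}$, $C_{1}^{k,j}$, $C_{1}^{i,j}$, $C_{1}^{j,i}$ must be solved.

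The first substantive step is to pin down the branching rule coming from \eqref{1}: expanding $C_{m}^{a,b}$, whose intermediate peg is $c=6-a-b$, requires exactly the four \emph{distinct} subproblems $C_{m-1}^{a,b}$, $C_{m-1}^{b,a}$, $C_{m-1}^{a,c}$ and $C_{m-1}^{c,b}$, the coefficient $2$ in front of $C_{m-1}^{a,b}$ contributing no new subproblem; thus every node of the tree has out-degree $4$. Applying this to the root $C_{n}^{i,j}$ gives its four children on level $n-1$, and writing out the children of each of those four and taking the union one checks that \emph{all six} ordered pairs $(a,b)$ with $a\neq b$, $a,b\in\{1,2,3\}$, already occur on level $n-2$. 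The accompanying closure fact --- the union of the children of all six ordered pairs is again the six ordered pairs --- then shows that every level from $n-2$ down to $1$ carries all six subproblem types.

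With this structure in hand, the remaining step is to translate it into a recurrence for $v_{n}$ by relating the level-$n$ search tree to the level-$(n-1)$ one, and then to solve that recurrence so as to obtain the closed form in the statement, checking consistency at $n=3$. I expect the main obstacle to be exactly this bookkeeping: showing that on each level below the top no subproblem is missed --- the ``at least six'' direction resting on the explicit union computation above, the ``at most six'' direction being forced by there being only three pegs --- and, just as importantly, being careful in the degenerate regime of small $n$, where a separate ``level $n-2$'' need not exist, which is precisely why the general formula is asserted only for $n\geq 3$ while $n\leq 2$ is stated apart.
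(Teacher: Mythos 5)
Your structural analysis of the search tree is sound as far as it goes, and it coincides with the first half of the paper's own argument: each node indeed spawns the four distinct subproblems $C_{m-1}^{a,b}$, $C_{m-1}^{b,a}$, $C_{m-1}^{a,c}$, $C_{m-1}^{c,b}$; the four children of the root already generate all six ordered pairs at level $n-2$; and the set of six ordered pairs is closed under taking children, so all six types appear on every level from $n-2$ down to $1$ (this is exactly what Figure~\ref{fig:tree} displays). The base values $v_{0},v_{1},v_{2}$ are also handled the same way the paper handles them.

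The gap is that you stop precisely where the proof has to happen: the step ``translate this into a recurrence for $v_{n}$ and solve it'' is never carried out, and it is not mere bookkeeping, because the structure you have established does not lead to $6^{n-2}+4$ under either natural reading of ``number of sub-problems.'' If $v_{n}$ counts \emph{distinct} subproblems, your analysis (six types on each of the levels $1,\dots,n-2$, plus four at level $n-1$) yields $6(n-2)+4$, which is linear in $n$ and agrees with $6^{n-2}+4$ only at $n=3$. If instead $v_{n}$ counts recursive calls with multiplicity, then constant out-degree $4$ yields $4+4^{2}+\cdots+4^{n-1}$, a power of $4$ count, again not $6^{n-2}+4$. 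To reach the stated exponential one must adopt the counting convention the paper uses implicitly --- namely that each of the six subproblems of order $h$ separately forces the evaluation of the six subproblems of order $h-1$, so that the counts multiply by $6$ per level --- and your ``six distinct types per level'' framing cannot recover that without first specifying this convention and rebuilding the recurrence around it. As written, your plan, if completed honestly, would prove a different formula from the one in the statement.
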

\begin{proof}
It is clear that $v_{0}=0$ and $v_{1}=1$, for case $n=2$,  for any source and destination pegs $i$ and $j$ we have $C_{2}^{i,j}=\min\{C_{1}^{i,k}+C_{1}^{k,j}+w_{ij},2C_{1}^{i,j}+C_{ 1}^{j,i}+w_{ik}+w_{kj}\}$ which means that four sub-problems needed to be evaluated which are $C_{1}^{i,j}$, $C_{1}^{i,k}$, $C_{1}^{j,i}$ and $C_{1}^{k,j}$, hence $v_{2}=4$. For $n\geq 3$ we have 
$$C_{n}^{i,j}=\min\{C_{n-1}^{i,k}+C_{n-1}^{k,j}+w_{ij},2C_{n-1}^{i,j}+C_{ n-1}^{j,i}+w_{ik}+w_{kj}\}, $$
where

\begin{align*}
    C_{n-1}^{i,j}&=\min\{C_{n-2}^{i,k}+C_{n-2}^{k,j}+w_{ij},2C_{n-2}^{i,j}+C_{ n-2}^{j,i}+w_{ik}+w_{kj}\},\\
    C_{n-1}^{i,k}&=\min\{C_{n-2}^{i,j}+C_{n-2}^{j,k}+w_{ik},2C_{n-2}^{i,k}+C_{ n-2}^{k,i}+w_{ij}+w_{jk}\},\\
    C_{n-1}^{j,i}&=\min\{C_{n-2}^{j,k}+C_{n-2}^{k,i}+w_{ji},2C_{n-2}^{j,i}+C_{ n-2}^{i,j}+w_{jk}+w_{ki}\},\\
    C_{n-1}^{k,j}&=\min\{C_{n-2}^{k,i}+C_{n-2}^{i,j}+w_{kj},2C_{n-2}^{k,j}+C_{ n-2}^{j,k}+w_{ki}+w_{ij}\}.\\
\end{align*}
In total we have $4$ sub-problems of order $n-1$,  $C_{n-1}^{i,j}$, $C_{n-1}^{i,k}$, $C_{n-1}^{j,i}$ and $C_{n-1}^{k,j}$ and six sub-problems of order $n-2$ which are $C_{n-2}^{i,j}$, $C_{n-2}^{i,k}$, $C_{n-2}^{j,i}$, $C_{n-2}^{j,k}$, $C_{n-2}^{k,i}$ and $C_{n-2}^{k,j}$. To calculate the $6$ sub-problems of order $h$ we need to evaluate all the $6$ possible sub-problems of order $h-1$ for all $2\leq h\leq n-2$. Therefor we have $6^{n-2}$ sub problems of order $h$ with $1\leq h\leq n-2$ in addition to the $4$ sub-problems of order $n-1$. Hence in total we have $v_{n}=6^{n-2}+4$.
\end{proof}
This last result allows us to find the time complexity of the recursive algorithm \textsc{WTHD}.
\begin{corollary}
The \textsc{WTHD} algorithm solves an instances of the WTH problem on $n$ dics  in a time complexity $\mathcal{O}(6^{n-2})$.
\end{corollary}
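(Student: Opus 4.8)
The plan is to derive the corollary directly from Proposition \ref{propo1}, adding two small verifications: that each sub-problem consulted by \textsc{WTHD} is resolved in constant time, and that the part of \textsc{WTHD} which actually emits the sequence of moves does not dominate the running time. I work in the unit-cost model in which a comparison or an addition of two reals costs $\mathcal{O}(1)$; this is the appropriate model since the weights $w_{ij}$ are arbitrary elements of $\mathbb{R}^{+}$. Throughout I may assume $n\ge 3$, the cases $n\le 2$ being $\mathcal{O}(1)$.

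First I would separate \textsc{WTHD} into a bottom-up tabulation phase that computes the values $C_{m}^{a,b}$ through the recurrence \eqref{1}, and a top-down phase that prints the moves using the tabulated values; this is the sense in which the algorithm is ``dynamic'', i.e.\ the tabulation is memoized and never recomputes a value. By Proposition \ref{propo1} the tabulation touches exactly $v_{n}=6^{n-2}+4$ distinct sub-problems, and once the values of the (at most four) smaller sub-problems on which $C_{m}^{a,b}$ depends are available, computing $C_{m}^{a,b}$ from \eqref{1} costs a fixed number of additions, one doubling, and one comparison, i.e.\ $\mathcal{O}(1)$. Hence the tabulation phase runs in time $\mathcal{O}(v_{n})=\mathcal{O}(6^{n-2})$.

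Next I would bound the emission phase. A single call \textsc{WTHD}$(m,a,b)$ does $\mathcal{O}(1)$ comparison work on already-tabulated sums, outputs at most two moves, and makes at most three recursive calls with first argument $m-1$; so if $T(n)$ denotes the number of calls made, then $T(n)\le 3T(n-1)+\mathcal{O}(1)$, whence $T(n)=\mathcal{O}(3^{n})$, and likewise the total number of printed moves equals $u_{n}^{i,j}\le 3^{n}-1$ by the bound established earlier. Therefore the emission phase runs in time $\mathcal{O}(3^{n})$.

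Finally I would combine the two estimates: since $6^{n-2}/3^{n}=2^{n}/36\to\infty$, we have $3^{n}=\mathcal{O}(6^{n-2})$, so the total time of \textsc{WTHD} is $\mathcal{O}(6^{n-2}+3^{n})=\mathcal{O}(6^{n-2})$, as claimed. I do not anticipate a genuine obstacle here; the only points needing care are making the memoization explicit (so that the relevant count is $v_{n}$ and not the size of the unfolded recursion tree) and checking that emitting the possibly exponential-length move sequence is cheaper than the tabulation, both of which are handled above.
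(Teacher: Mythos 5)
Your proposal is correct and follows the same route as the paper, which simply derives the corollary from Proposition \ref{propo1}; the paper's entire proof is the one-line observation that the bound is immediate from $v_{n}=6^{n-2}+4$. The extra verifications you supply --- constant work per tabulated sub-problem, and the $\mathcal{O}(3^{n})=\mathcal{O}(6^{n-2})$ bound on the move-emission phase --- are details the paper leaves implicit, and they are handled correctly.
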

\begin{proof}
It is a direct result of Proposition \ref{propo1}.
\end{proof}
Here we list some direct results of Theorem \ref{theo1}, for symmetry reasons we have the follwoing result which is more or less easy to see.
\begin{corollary}

If $w_{ij}=w_{ji}$ for all $i,j\in\{1,2,3\}$, then
\begin{equation}\label{2}C_{n}^{i,j}=C_{n}^{j,i}=
\begin{cases}
\min\{w_{ij},w_{ik}+w_{kj}\}&si\; n=1,\\
\min\{C_{n-1}^{i,k}+C_{n-1}^{k,j}+w_{ij},3C_{n-1}^{i,j}+w_{ ik}+w_{kj}\}&else.
\end{cases}
\end{equation}
\end{corollary}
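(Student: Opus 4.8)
The plan is to prove both assertions simultaneously by induction on $n$, using the recursion of Theorem~\ref{theo1}. The key observation is that swapping the roles of the source and destination pegs $i$ and $j$ leaves the intermediate peg $k=6-i-j$ unchanged, and under the hypothesis $w_{ij}=w_{ji}$ every weight appearing in~\eqref{1} is invariant under this swap.

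First I would dispose of the base case $n=1$. Here Theorem~\ref{theo1} gives $C_1^{i,j}=\min\{w_{ij},w_{ik}+w_{kj}\}$ and $C_1^{j,i}=\min\{w_{ji},w_{jk}+w_{ki}\}$; since $w_{ij}=w_{ji}$, $w_{ik}=w_{ki}$ and $w_{kj}=w_{jk}$, the two quantities coincide, which establishes both the symmetry $C_1^{i,j}=C_1^{j,i}$ and the (trivial) first line of~\eqref{2}.

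For the inductive step, suppose $C_m^{a,b}=C_m^{b,a}$ holds for every $m<n$ and every ordered pair of distinct pegs $(a,b)$. Apply Theorem~\ref{theo1} to both $C_n^{i,j}$ and $C_n^{j,i}$. In the first branch, $C_{n-1}^{i,k}+C_{n-1}^{k,j}+w_{ij}$ becomes $C_{n-1}^{k,i}+C_{n-1}^{j,k}+w_{ji}$ after applying the induction hypothesis to the two subproblems together with weight symmetry, and this is exactly the first branch of $C_n^{j,i}$; the same matching works for the second branch, so $C_n^{i,j}=C_n^{j,i}$. Finally, substituting $C_{n-1}^{j,i}=C_{n-1}^{i,j}$ into the second branch $2C_{n-1}^{i,j}+C_{n-1}^{j,i}+w_{ik}+w_{kj}$ collapses it to $3C_{n-1}^{i,j}+w_{ik}+w_{kj}$, which yields the second line of~\eqref{2}.

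I do not expect a genuine obstacle here; the only point requiring care is bookkeeping — the induction must be stated for all ordered peg pairs at once (not merely the fixed pair $(i,j)$), since the simplification of $C_n^{i,j}$ invokes symmetry of the three smaller quantities $C_{n-1}^{i,j}$, $C_{n-1}^{i,k}$ and $C_{n-1}^{k,j}$, which are attached to different pairs. Once the inductive statement is set up in this uniform way, the argument is a direct substitution.
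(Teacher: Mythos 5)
Your proof is correct and is exactly the argument the paper intends: the paper gives no written proof for this corollary, merely asserting it follows from Theorem~\ref{theo1} ``for symmetry reasons,'' and your induction over all ordered peg pairs (using $w_{ij}=w_{ji}$ to match the two branches and collapse $2C_{n-1}^{i,j}+C_{n-1}^{j,i}$ to $3C_{n-1}^{i,j}$) supplies precisely the missing details. Your remark that the inductive statement must be quantified over all ordered pairs, not just $(i,j)$, is the one genuinely non-trivial bookkeeping point, and you handle it correctly.
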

As we mentioned before in this paper, the classical Tower of Hanoi has a solution with $2^{n}-1$ moves, which is the minimum number of moves needed to transfer a whole tower of $n$ discs from a peg to another according to rules $(i-iii)$. Using this fact, we find the following result.

\begin{corollary}
If $w_{ij}=x\in\mathbb{R}$ for all $i,j\in\{1,2,3\}$, then
$$C_{n}^{i,j}=(2^{n}-1)x.$$
\end{corollary}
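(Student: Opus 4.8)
The plan is to argue by induction on $n$ using the recursion of Theorem~\ref{theo1}, exactly as for the previous two corollaries. I will take $x\ge 0$ (this is forced by the standing hypothesis $w_{ij}\in\mathbb{R}^{+}$ in the description of the problem; the ``$x\in\mathbb{R}$'' in the statement should be read with that convention), since without it the claim fails and one would instead get the cost of the \emph{longest} admissible solution.

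First I would check the base case $n=1$: substituting $w_{ij}=w_{ik}=w_{kj}=x$ into the first line of \eqref{1} gives $C_1^{i,j}=\min\{x,2x\}=x=(2^1-1)x$, using $x\ge 0$. For the inductive step, assume $C_{n-1}^{i,j}=(2^{n-1}-1)x$ for every ordered pair of pegs. Then the two quantities compared in the second line of \eqref{1} become
\[
C_{n-1}^{i,k}+C_{n-1}^{k,j}+w_{ij}=2(2^{n-1}-1)x+x=(2^{n}-1)x,
\]
and
\[
2C_{n-1}^{i,j}+C_{n-1}^{j,i}+w_{ik}+w_{kj}=3(2^{n-1}-1)x+2x=(3\cdot 2^{n-1}-1)x .
\]
Since $x\ge 0$ we have $(2^{n}-1)x\le(3\cdot2^{n-1}-1)x$, so the minimum is attained by the first branch and $C_{n}^{i,j}=(2^{n}-1)x$, completing the induction.

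There is essentially no obstacle here: the only subtlety worth flagging is the sign of $x$, which is what makes the ``directly move the largest disc'' branch optimal at every level. Alternatively, I could give a one‑line conceptual argument instead of the computation: when every move costs the same $x\ge 0$, the total cost of any admissible solution equals $x$ times its number of moves, which by the Proposition is at least $2^{n}-1$; the classical Tower of Hanoi solution realizes exactly $2^{n}-1$ moves, hence $C_{n}^{i,j}=(2^{n}-1)x$. I would likely state the inductive version as the main proof and mention this remark afterwards.
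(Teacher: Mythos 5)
Your proof is correct, but your primary argument takes a different route from the paper's. The paper proves this corollary in one conceptual line: when all weights equal $x$, the total cost of any solution is $x$ times its number of moves, and the minimum number of moves for the classical puzzle is $2^{n}-1$ — which is exactly the "alternative remark" you append at the end. Your main proof instead runs an induction directly on the recursion \eqref{1}, computing both branches as $(2^{n}-1)x$ and $(3\cdot 2^{n-1}-1)x$ and observing that the first is the minimum. What your version buys is self-containedness (it uses only Theorem~\ref{theo1}, not the external fact about the classical puzzle) and, more importantly, it makes explicit where the sign of $x$ enters: both the base case $\min\{x,2x\}=x$ and the inductive comparison require $x\ge 0$. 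Your observation that the statement's ``$x\in\mathbb{R}$'' must really mean $x\in\mathbb{R}^{+}$ (otherwise $C_{1}^{i,j}=\min\{x,2x\}=2x$ for $x<0$ and the claim fails) is a legitimate correction to the statement that the paper's one-line proof silently assumes as well. Both arguments are valid; the paper's is shorter, yours is more careful.
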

\begin{proof}
    When all weights are equal, the problem becomes the minimization of the number of moves, and we know that the minimum number of moves to solve a Tower of Hanoi on $n$ discs  is equal to $2^{n}-1$.
\end{proof}

\begin{example}
Consider a weighted Tower of Hanoi weighted problem, the associated weights are given in the matrix bellow
$$(w_{ij})_{1\leq i,j\leq 3 }=
\begin{pmatrix}
0&3&15\\
8&0&2\\
5&6&0
\end{pmatrix}.
$$
The objective is to transfer a tower of $n=3$ discs from peg $i=1$ to  peg $j=3$ while minimizing the total cost. For all $n\geq 0$ we denote $L_{n}^{i,j}=C_{n-1}^{i,k}+C_{n-1}^{k,j}+w_{ij}$ and $R_{n}^{i,j}=2C_{n-1}^{i,j }+C_{n-1}^{j,i}+w_{ik}+w_{kj}$.
The  dynamic Table \ref{tab} shows the steps that the dynamic recursive relation \ref{1} takes to find the optimal value for $C_{n}^{i,j}$. The values with $*$ are optimal in accordance with the corresponding source and destination pegs while the red color corresponds to the selected sub-problems used to evaluate $C_{3}^{1,3}$.
\begin{table}[H]
\centering
\begin{tabular}{@{}c|cc|cc|cc|cc|cc|cc|@{}}
\cmidrule(l){2-13}
                               &\multicolumn{2}{c|}{1$\longrightarrow$2} &\multicolumn{2}{c|}{1$\longrightarrow$3}  & \multicolumn{2}{c|}{2$\longrightarrow$1} & \multicolumn{2}{c|}{2$\longrightarrow$3} & \multicolumn{2}{c|}{3$\longrightarrow$1} & \multicolumn{2}{c|}{3$\longrightarrow$2} \\ \midrule
\multicolumn{1}{|c|}{n}            & \multicolumn{1}{c|}{$L_{n}^{1,2}$}        & $D_{n}^{1,2}$   & \multicolumn{1}{c|}{$L_{n}^{1,3}$}        & $R_{n}^{1,3}$     & \multicolumn{1}{c|}{$L_{n}^{2,1}$}        & $R_{n}^{2,1}$        & \multicolumn{1}{c|}{$L_{n}^{2,3}$}        & $R_{n}^{2,3}$        & \multicolumn{1}{c|}{$L_{n}^{3,1}$}        & $L_{n}^{3,1}$        & \multicolumn{1}{c|}{$L_{n}^{3,2}$}        & $R_{n}^{3,2}$        \\ \midrule
\multicolumn{1}{|c|}{1}               & \multicolumn{1}{c|}{3*}       & 21 & \multicolumn{1}{c|}{15}       & \textcolor{red}{5*}      & \multicolumn{1}{c|}{8}        & \textcolor{red}{7*}       & \multicolumn{1}{c|}{2*}       & 23       & \multicolumn{1}{c|}{5*}       & 14       & \multicolumn{1}{c|}{\textcolor{red}{6*}}       & 8        \\ \midrule
\multicolumn{1}{|c|}{2}              & \multicolumn{1}{c|}{\textcolor{red}{14*}}      & 34   & \multicolumn{1}{c|}{20*}      & 20*    & \multicolumn{1}{c|}{15*}      & 24       & \multicolumn{1}{c|}{\textcolor{red}{14*}}      & 33       & \multicolumn{1}{c|}{18*}      & 29       & \multicolumn{1}{c|}{14*}      & 22       \\ \midrule
\multicolumn{1}{|c|}{3}              & \multicolumn{1}{c|}{37*}      & 64 & \multicolumn{1}{c|}{\textbf{43*}}      & 63       & \multicolumn{1}{c|}{40*}      & 51       & \multicolumn{1}{c|}{37*}      & 65       & \multicolumn{1}{c|}{34*}      & 70       & \multicolumn{1}{c|}{38*}      & 50       \\ \midrule
\multicolumn{1}{|c|}{$\colon$} & \multicolumn{1}{c|}{$\colon$} & $\colon$ & \multicolumn{1}{c|}{$\colon$} & $\colon$ & \multicolumn{1}{c|}{$\colon$} & $\colon$ & \multicolumn{1}{c|}{$\colon$} & $\colon$ & \multicolumn{1}{c|}{$\colon$} & $\colon$ & \multicolumn{1}{c|}{$\colon$} & $\colon$ \\ 
\end{tabular}
    \caption{The progress of calculating the total cost}

\label{tab}
\end{table}

So the optimal cost to transfer a tower of $n=3$ discs from peg $1$ to peg $3$ 
is $C_{3}^{1,3}=43$ where 
\begin{align*}
    43&=14+14+15=C_{2}^{1,2}+C_{2}^{2 ,3}+w_{1,3}\\
    &=(5+6+3)+(7+5+2)+15 =(C_{1}^{1,3}+C_{1}^{3,2}+w_{1,2})+(C_{1}^{2,1}+C_{1}^{1,3}+w_{2,3})+w_{1,3}.
\end{align*}
\end{example}
\section{The relationship to variants with restricted disc moves}
Variants with restricted disc moves consists of forbidding some  transitions between certain pegs, for example, moves between pegs $1$ and $3$ are forbidden, which implies that the only moves allowed are those between adjacent pegs (if the pegs are aligned from the  peg $1$ to  peg $3$), this variant is called the Linear Tower of Hanoi. For more information, see chapter $8$ in \cite{hinz2018tower}.\\ 
It is proven that the only solvable variants based on disc movement restrictions are those  having a strongly connected movement digraph, which consists of three vertices that represent the three pegs, and an arc from vertex $u$ to vertex $v$ exists if and only if moves from peg $u$ to peg $v$ are allowed,  Figure \ref{GFC} shows the only five strongly connected digraphs on three vertices. These are the only solvable variants that are based on restricting disc movements on three pegs.
	\begin{figure}[H]
	    \centering
\tikzset{every picture/.style={line width=0.75pt}} 

\begin{tikzpicture}[x=0.75pt,y=0.75pt,yscale=-1,xscale=1]

\draw  [fill={rgb, 255:red, 0; green, 0; blue, 0 }  ,fill opacity=1 ] (128.01,35.12) .. controls (128.01,31.86) and (130.5,29.21) .. (133.56,29.21) .. controls (136.63,29.21) and (139.11,31.86) .. (139.11,35.12) .. controls (139.11,38.39) and (136.63,41.03) .. (133.56,41.03) .. controls (130.5,41.03) and (128.01,38.39) .. (128.01,35.12) -- cycle ;
\draw  [fill={rgb, 255:red, 0; green, 0; blue, 0 }  ,fill opacity=1 ] (191.63,146.25) .. controls (191.63,142.99) and (194.11,140.34) .. (197.18,140.34) .. controls (200.24,140.34) and (202.73,142.99) .. (202.73,146.25) .. controls (202.73,149.51) and (200.24,152.16) .. (197.18,152.16) .. controls (194.11,152.16) and (191.63,149.51) .. (191.63,146.25) -- cycle ;
\draw  [fill={rgb, 255:red, 0; green, 0; blue, 0 }  ,fill opacity=1 ] (65.14,146.25) .. controls (65.14,142.99) and (67.62,140.34) .. (70.68,140.34) .. controls (73.75,140.34) and (76.23,142.99) .. (76.23,146.25) .. controls (76.23,149.51) and (73.75,152.16) .. (70.68,152.16) .. controls (67.62,152.16) and (65.14,149.51) .. (65.14,146.25) -- cycle ;
\draw  [fill={rgb, 255:red, 0; green, 0; blue, 0 }  ,fill opacity=1 ] (328.01,35.12) .. controls (328.01,31.86) and (330.5,29.21) .. (333.56,29.21) .. controls (336.63,29.21) and (339.11,31.86) .. (339.11,35.12) .. controls (339.11,38.39) and (336.63,41.03) .. (333.56,41.03) .. controls (330.5,41.03) and (328.01,38.39) .. (328.01,35.12) -- cycle ;
\draw  [fill={rgb, 255:red, 0; green, 0; blue, 0 }  ,fill opacity=1 ] (391.63,146.25) .. controls (391.63,142.99) and (394.11,140.34) .. (397.18,140.34) .. controls (400.24,140.34) and (402.73,142.99) .. (402.73,146.25) .. controls (402.73,149.51) and (400.24,152.16) .. (397.18,152.16) .. controls (394.11,152.16) and (391.63,149.51) .. (391.63,146.25) -- cycle ;
\draw  [fill={rgb, 255:red, 0; green, 0; blue, 0 }  ,fill opacity=1 ] (265.14,146.25) .. controls (265.14,142.99) and (267.62,140.34) .. (270.68,140.34) .. controls (273.75,140.34) and (276.23,142.99) .. (276.23,146.25) .. controls (276.23,149.51) and (273.75,152.16) .. (270.68,152.16) .. controls (267.62,152.16) and (265.14,149.51) .. (265.14,146.25) -- cycle ;
\draw  [fill={rgb, 255:red, 0; green, 0; blue, 0 }  ,fill opacity=1 ] (528.01,35.12) .. controls (528.01,31.86) and (530.5,29.21) .. (533.56,29.21) .. controls (536.63,29.21) and (539.11,31.86) .. (539.11,35.12) .. controls (539.11,38.39) and (536.63,41.03) .. (533.56,41.03) .. controls (530.5,41.03) and (528.01,38.39) .. (528.01,35.12) -- cycle ;
\draw  [fill={rgb, 255:red, 0; green, 0; blue, 0 }  ,fill opacity=1 ] (591.63,146.25) .. controls (591.63,142.99) and (594.11,140.34) .. (597.18,140.34) .. controls (600.24,140.34) and (602.73,142.99) .. (602.73,146.25) .. controls (602.73,149.51) and (600.24,152.16) .. (597.18,152.16) .. controls (594.11,152.16) and (591.63,149.51) .. (591.63,146.25) -- cycle ;
\draw  [fill={rgb, 255:red, 0; green, 0; blue, 0 }  ,fill opacity=1 ] (465.14,146.25) .. controls (465.14,142.99) and (467.62,140.34) .. (470.68,140.34) .. controls (473.75,140.34) and (476.23,142.99) .. (476.23,146.25) .. controls (476.23,149.51) and (473.75,152.16) .. (470.68,152.16) .. controls (467.62,152.16) and (465.14,149.51) .. (465.14,146.25) -- cycle ;
\draw  [fill={rgb, 255:red, 0; green, 0; blue, 0 }  ,fill opacity=1 ] (228.01,233.12) .. controls (228.01,229.86) and (230.5,227.21) .. (233.56,227.21) .. controls (236.63,227.21) and (239.11,229.86) .. (239.11,233.12) .. controls (239.11,236.39) and (236.63,239.03) .. (233.56,239.03) .. controls (230.5,239.03) and (228.01,236.39) .. (228.01,233.12) -- cycle ;
\draw  [fill={rgb, 255:red, 0; green, 0; blue, 0 }  ,fill opacity=1 ] (291.63,344.25) .. controls (291.63,340.99) and (294.11,338.34) .. (297.18,338.34) .. controls (300.24,338.34) and (302.73,340.99) .. (302.73,344.25) .. controls (302.73,347.51) and (300.24,350.16) .. (297.18,350.16) .. controls (294.11,350.16) and (291.63,347.51) .. (291.63,344.25) -- cycle ;
\draw  [fill={rgb, 255:red, 0; green, 0; blue, 0 }  ,fill opacity=1 ] (165.14,344.25) .. controls (165.14,340.99) and (167.62,338.34) .. (170.68,338.34) .. controls (173.75,338.34) and (176.23,340.99) .. (176.23,344.25) .. controls (176.23,347.51) and (173.75,350.16) .. (170.68,350.16) .. controls (167.62,350.16) and (165.14,347.51) .. (165.14,344.25) -- cycle ;
\draw  [fill={rgb, 255:red, 0; green, 0; blue, 0 }  ,fill opacity=1 ] (428.01,233.12) .. controls (428.01,229.86) and (430.5,227.21) .. (433.56,227.21) .. controls (436.63,227.21) and (439.11,229.86) .. (439.11,233.12) .. controls (439.11,236.39) and (436.63,239.03) .. (433.56,239.03) .. controls (430.5,239.03) and (428.01,236.39) .. (428.01,233.12) -- cycle ;
\draw  [fill={rgb, 255:red, 0; green, 0; blue, 0 }  ,fill opacity=1 ] (491.63,344.25) .. controls (491.63,340.99) and (494.11,338.34) .. (497.18,338.34) .. controls (500.24,338.34) and (502.73,340.99) .. (502.73,344.25) .. controls (502.73,347.51) and (500.24,350.16) .. (497.18,350.16) .. controls (494.11,350.16) and (491.63,347.51) .. (491.63,344.25) -- cycle ;
\draw  [fill={rgb, 255:red, 0; green, 0; blue, 0 }  ,fill opacity=1 ] (365.14,344.25) .. controls (365.14,340.99) and (367.62,338.34) .. (370.68,338.34) .. controls (373.75,338.34) and (376.23,340.99) .. (376.23,344.25) .. controls (376.23,347.51) and (373.75,350.16) .. (370.68,350.16) .. controls (367.62,350.16) and (365.14,347.51) .. (365.14,344.25) -- cycle ;

\draw  [color={rgb, 255:red, 0; green, 0; blue, 0 }  ,draw opacity=0 ]  (133.19, 37.09) circle [x radius= 13.6, y radius= 13.6]   ;
\draw (127.19,29.49) node [anchor=north west][inner sep=0.75pt]    {};
\draw  [color={rgb, 255:red, 0; green, 0; blue, 0 }  ,draw opacity=0 ]  (70.32, 147.43) circle [x radius= 13.6, y radius= 13.6]   ;
\draw (64.32,139.83) node [anchor=north west][inner sep=0.75pt]    {};
\draw  [color={rgb, 255:red, 0; green, 0; blue, 0 }  ,draw opacity=0 ]  (196.07, 147.43) circle [x radius= 13.6, y radius= 13.6]   ;
\draw (190.07,139.83) node [anchor=north west][inner sep=0.75pt]    {};
\draw (139.77,22.4) node [anchor=north west][inner sep=0.75pt]    {$1$};
\draw (202.64,146.14) node [anchor=north west][inner sep=0.75pt]    {$2$};
\draw (51,146.14) node [anchor=north west][inner sep=0.75pt]    {$3$};
\draw (124.82,94.52) node [anchor=north west][inner sep=0.75pt]    {$\stackrel {\leftrightarrow}{K}_{3}$};
\draw  [color={rgb, 255:red, 0; green, 0; blue, 0 }  ,draw opacity=0 ]  (333.19, 37.09) circle [x radius= 13.6, y radius= 13.6]   ;
\draw (327.19,29.49) node [anchor=north west][inner sep=0.75pt]    {};
\draw  [color={rgb, 255:red, 0; green, 0; blue, 0 }  ,draw opacity=0 ]  (270.32, 147.43) circle [x radius= 13.6, y radius= 13.6]   ;
\draw (264.32,139.83) node [anchor=north west][inner sep=0.75pt]    {};
\draw  [color={rgb, 255:red, 0; green, 0; blue, 0 }  ,draw opacity=0 ]  (396.07, 147.43) circle [x radius= 13.6, y radius= 13.6]   ;
\draw (390.07,139.83) node [anchor=north west][inner sep=0.75pt]    {};
\draw (339.77,22.4) node [anchor=north west][inner sep=0.75pt]    {$1$};
\draw (402.64,146.14) node [anchor=north west][inner sep=0.75pt]    {$2$};
\draw (251,146.14) node [anchor=north west][inner sep=0.75pt]    {$3$};
\draw (324.82,94.52) node [anchor=north west][inner sep=0.75pt]    {$\stackrel {\leftrightarrow}{L}_{3}$};
\draw  [color={rgb, 255:red, 0; green, 0; blue, 0 }  ,draw opacity=0 ]  (533.19, 37.09) circle [x radius= 13.6, y radius= 13.6]   ;
\draw (527.19,29.49) node [anchor=north west][inner sep=0.75pt]    {};
\draw  [color={rgb, 255:red, 0; green, 0; blue, 0 }  ,draw opacity=0 ]  (470.32, 147.43) circle [x radius= 13.6, y radius= 13.6]   ;
\draw (464.32,139.83) node [anchor=north west][inner sep=0.75pt]    {};
\draw  [color={rgb, 255:red, 0; green, 0; blue, 0 }  ,draw opacity=0 ]  (596.07, 147.43) circle [x radius= 13.6, y radius= 13.6]   ;
\draw (590.07,139.83) node [anchor=north west][inner sep=0.75pt]    {};
\draw (539.77,22.4) node [anchor=north west][inner sep=0.75pt]    {$1$};
\draw (602.64,146.14) node [anchor=north west][inner sep=0.75pt]    {$2$};
\draw (451,146.14) node [anchor=north west][inner sep=0.75pt]    {$3$};
\draw (411,297.4) node [anchor=north west][inner sep=0.75pt]    {$\overrightarrow{C_{3+}}$};
\draw  [color={rgb, 255:red, 0; green, 0; blue, 0 }  ,draw opacity=0 ]  (233.19, 235.09) circle [x radius= 13.6, y radius= 13.6]   ;
\draw (227.19,227.49) node [anchor=north west][inner sep=0.75pt]    {};
\draw  [color={rgb, 255:red, 0; green, 0; blue, 0 }  ,draw opacity=0 ]  (170.32, 345.43) circle [x radius= 13.6, y radius= 13.6]   ;
\draw (164.32,337.83) node [anchor=north west][inner sep=0.75pt]    {};
\draw  [color={rgb, 255:red, 0; green, 0; blue, 0 }  ,draw opacity=0 ]  (296.07, 345.43) circle [x radius= 13.6, y radius= 13.6]   ;
\draw (290.07,337.83) node [anchor=north west][inner sep=0.75pt]    {};
\draw (239.77,220.4) node [anchor=north west][inner sep=0.75pt]    {$1$};
\draw (302.64,344.14) node [anchor=north west][inner sep=0.75pt]    {$2$};
\draw (151,344.14) node [anchor=north west][inner sep=0.75pt]    {$3$};
\draw (224.82,292.52) node [anchor=north west][inner sep=0.75pt]    {$\stackrel {\leftrightarrow}{K}_{3-}$};
\draw  [color={rgb, 255:red, 0; green, 0; blue, 0 }  ,draw opacity=0 ]  (433.19, 235.09) circle [x radius= 13.6, y radius= 13.6]   ;
\draw (427.19,227.49) node [anchor=north west][inner sep=0.75pt]    {};
\draw  [color={rgb, 255:red, 0; green, 0; blue, 0 }  ,draw opacity=0 ]  (370.32, 345.43) circle [x radius= 13.6, y radius= 13.6]   ;
\draw (364.32,337.83) node [anchor=north west][inner sep=0.75pt]    {};
\draw  [color={rgb, 255:red, 0; green, 0; blue, 0 }  ,draw opacity=0 ]  (496.07, 345.43) circle [x radius= 13.6, y radius= 13.6]   ;
\draw (490.07,337.83) node [anchor=north west][inner sep=0.75pt]    {};
\draw (439.77,220.4) node [anchor=north west][inner sep=0.75pt]    {$1$};
\draw (502.64,344.14) node [anchor=north west][inner sep=0.75pt]    {$2$};
\draw (351,344.14) node [anchor=north west][inner sep=0.75pt]    {$3$};
\draw (522,94.4) node [anchor=north west][inner sep=0.75pt]    {$\overrightarrow{C_{3}}$};
\draw    (70.04,133.83) .. controls (81.16,101.4) and (97.7,72.06) .. (119.65,45.8) ;
\draw [shift={(121.35,43.79)}, rotate = 130.42] [fill={rgb, 255:red, 0; green, 0; blue, 0 }  ][line width=0.08]  [draw opacity=0] (8.93,-4.29) -- (0,0) -- (8.93,4.29) -- cycle    ;
\draw    (82.14,140.7) .. controls (115.32,135.23) and (148.51,135.09) .. (181.69,140.29) ;
\draw [shift={(184.24,140.7)}, rotate = 189.37] [fill={rgb, 255:red, 0; green, 0; blue, 0 }  ][line width=0.08]  [draw opacity=0] (8.93,-4.29) -- (0,0) -- (8.93,4.29) -- cycle    ;
\draw    (133.47,50.69) .. controls (122.35,83.13) and (105.81,112.47) .. (83.86,138.72) ;
\draw [shift={(82.16,140.74)}, rotate = 310.43] [fill={rgb, 255:red, 0; green, 0; blue, 0 }  ][line width=0.08]  [draw opacity=0] (8.93,-4.29) -- (0,0) -- (8.93,4.29) -- cycle    ;
\draw    (145.03,43.79) .. controls (167.27,69.89) and (184.08,99.07) .. (195.48,131.34) ;
\draw [shift={(196.35,133.83)}, rotate = 251.07] [fill={rgb, 255:red, 0; green, 0; blue, 0 }  ][line width=0.08]  [draw opacity=0] (8.93,-4.29) -- (0,0) -- (8.93,4.29) -- cycle    ;
\draw    (184.23,140.74) .. controls (161.99,114.64) and (145.18,85.46) .. (133.78,53.18) ;
\draw [shift={(132.91,50.69)}, rotate = 71.08] [fill={rgb, 255:red, 0; green, 0; blue, 0 }  ][line width=0.08]  [draw opacity=0] (8.93,-4.29) -- (0,0) -- (8.93,4.29) -- cycle    ;
\draw    (184.25,154.16) .. controls (151.06,159.63) and (117.87,159.77) .. (84.69,154.57) ;
\draw [shift={(82.14,154.16)}, rotate = 9.36] [fill={rgb, 255:red, 0; green, 0; blue, 0 }  ][line width=0.08]  [draw opacity=0] (8.93,-4.29) -- (0,0) -- (8.93,4.29) -- cycle    ;
\draw    (282.14,140.7) .. controls (315.32,135.23) and (348.51,135.09) .. (381.69,140.29) ;
\draw [shift={(384.24,140.7)}, rotate = 189.37] [fill={rgb, 255:red, 0; green, 0; blue, 0 }  ][line width=0.08]  [draw opacity=0] (8.93,-4.29) -- (0,0) -- (8.93,4.29) -- cycle    ;
\draw    (345.03,43.79) .. controls (367.27,69.89) and (384.08,99.07) .. (395.48,131.34) ;
\draw [shift={(396.35,133.83)}, rotate = 251.07] [fill={rgb, 255:red, 0; green, 0; blue, 0 }  ][line width=0.08]  [draw opacity=0] (8.93,-4.29) -- (0,0) -- (8.93,4.29) -- cycle    ;
\draw    (384.23,140.74) .. controls (361.99,114.64) and (345.18,85.46) .. (333.78,53.18) ;
\draw [shift={(332.91,50.69)}, rotate = 71.08] [fill={rgb, 255:red, 0; green, 0; blue, 0 }  ][line width=0.08]  [draw opacity=0] (8.93,-4.29) -- (0,0) -- (8.93,4.29) -- cycle    ;
\draw    (384.25,154.16) .. controls (351.06,159.63) and (317.87,159.77) .. (284.69,154.57) ;
\draw [shift={(282.14,154.16)}, rotate = 9.36] [fill={rgb, 255:red, 0; green, 0; blue, 0 }  ][line width=0.08]  [draw opacity=0] (8.93,-4.29) -- (0,0) -- (8.93,4.29) -- cycle    ;
\draw    (475.02,134.67) .. controls (485.73,103.3) and (501.71,74.94) .. (522.96,49.59) ;
\draw [shift={(524.61,47.64)}, rotate = 130.51] [fill={rgb, 255:red, 0; green, 0; blue, 0 }  ][line width=0.08]  [draw opacity=0] (8.93,-4.29) -- (0,0) -- (8.93,4.29) -- cycle    ;
\draw    (541.78,47.64) .. controls (563.31,72.84) and (579.56,101.05) .. (590.53,132.26) ;
\draw [shift={(591.37,134.67)}, rotate = 251.16] [fill={rgb, 255:red, 0; green, 0; blue, 0 }  ][line width=0.08]  [draw opacity=0] (8.93,-4.29) -- (0,0) -- (8.93,4.29) -- cycle    ;
\draw    (582.6,149.37) .. controls (550.48,154.7) and (518.37,154.84) .. (486.25,149.77) ;
\draw [shift={(483.78,149.37)}, rotate = 9.42] [fill={rgb, 255:red, 0; green, 0; blue, 0 }  ][line width=0.08]  [draw opacity=0] (8.93,-4.29) -- (0,0) -- (8.93,4.29) -- cycle    ;
\draw    (175.02,332.67) .. controls (185.73,301.3) and (201.71,272.94) .. (222.96,247.59) ;
\draw [shift={(224.61,245.64)}, rotate = 130.51] [fill={rgb, 255:red, 0; green, 0; blue, 0 }  ][line width=0.08]  [draw opacity=0] (8.93,-4.29) -- (0,0) -- (8.93,4.29) -- cycle    ;
\draw    (182.14,338.7) .. controls (215.32,333.23) and (248.51,333.09) .. (281.69,338.29) ;
\draw [shift={(284.24,338.7)}, rotate = 189.37] [fill={rgb, 255:red, 0; green, 0; blue, 0 }  ][line width=0.08]  [draw opacity=0] (8.93,-4.29) -- (0,0) -- (8.93,4.29) -- cycle    ;
\draw    (245.03,241.79) .. controls (267.27,267.89) and (284.08,297.07) .. (295.48,329.34) ;
\draw [shift={(296.35,331.83)}, rotate = 251.07] [fill={rgb, 255:red, 0; green, 0; blue, 0 }  ][line width=0.08]  [draw opacity=0] (8.93,-4.29) -- (0,0) -- (8.93,4.29) -- cycle    ;
\draw    (284.23,338.74) .. controls (261.99,312.64) and (245.18,283.46) .. (233.78,251.18) ;
\draw [shift={(232.91,248.69)}, rotate = 71.08] [fill={rgb, 255:red, 0; green, 0; blue, 0 }  ][line width=0.08]  [draw opacity=0] (8.93,-4.29) -- (0,0) -- (8.93,4.29) -- cycle    ;
\draw    (284.25,352.16) .. controls (251.06,357.63) and (217.87,357.77) .. (184.69,352.57) ;
\draw [shift={(182.14,352.16)}, rotate = 9.36] [fill={rgb, 255:red, 0; green, 0; blue, 0 }  ][line width=0.08]  [draw opacity=0] (8.93,-4.29) -- (0,0) -- (8.93,4.29) -- cycle    ;
\draw    (375.02,332.67) .. controls (385.73,301.3) and (401.71,272.94) .. (422.96,247.59) ;
\draw [shift={(424.61,245.64)}, rotate = 130.51] [fill={rgb, 255:red, 0; green, 0; blue, 0 }  ][line width=0.08]  [draw opacity=0] (8.93,-4.29) -- (0,0) -- (8.93,4.29) -- cycle    ;
\draw    (445.03,241.79) .. controls (467.27,267.89) and (484.08,297.07) .. (495.48,329.34) ;
\draw [shift={(496.35,331.83)}, rotate = 251.07] [fill={rgb, 255:red, 0; green, 0; blue, 0 }  ][line width=0.08]  [draw opacity=0] (8.93,-4.29) -- (0,0) -- (8.93,4.29) -- cycle    ;
\draw    (484.23,338.74) .. controls (461.99,312.64) and (445.18,283.46) .. (433.78,251.18) ;
\draw [shift={(432.91,248.69)}, rotate = 71.08] [fill={rgb, 255:red, 0; green, 0; blue, 0 }  ][line width=0.08]  [draw opacity=0] (8.93,-4.29) -- (0,0) -- (8.93,4.29) -- cycle    ;
\draw    (482.6,347.37) .. controls (450.48,352.7) and (418.37,352.84) .. (386.25,347.77) ;
\draw [shift={(383.78,347.37)}, rotate = 9.42] [fill={rgb, 255:red, 0; green, 0; blue, 0 }  ][line width=0.08]  [draw opacity=0] (8.93,-4.29) -- (0,0) -- (8.93,4.29) -- cycle    ;

\end{tikzpicture}

	    \caption{Strongly connected digraphs on three vertices}
	    \label{GFC}
	\end{figure}
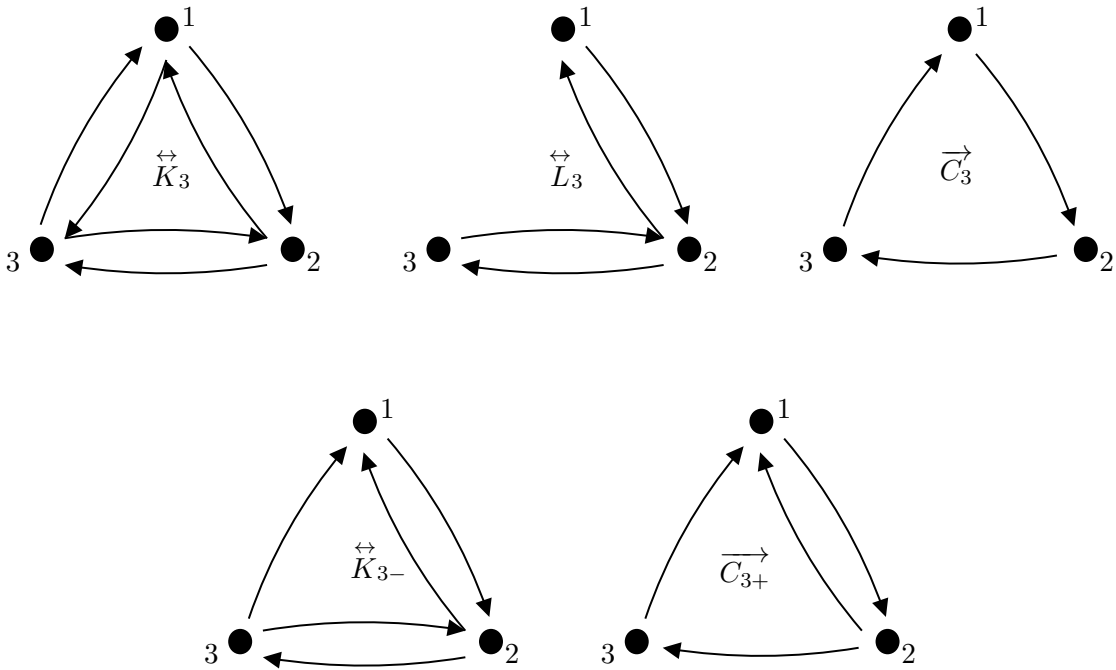
Sapir in \cite{sapir2004tower}, proposed an algorithm that solves any such variant in a minimum number of moves. By carefully choosing the weights on arcs, we can design a  class of WTH instances that corresponds to one of the five variants in such a way  that these instances have an optimal solution that respects the restriction of disc moves of the corresponding variant; Which implies that algorithm \textsc{WTHD} can be used to solve any variant with restricted disc moves.\\

Let $X\in \{\overleftrightarrow{K}_{3},\overleftrightarrow{L}_{3},\overrightarrow{C} _{3},\overleftrightarrow{K}_{3-},\overrightarrow{C}_{3+}\}$ be a variant of the Tower of Hanoi with restricted disc moves and $A(X)$ its digraph movement. To forbid  movements from peg $i$ to  peg $j$, it suffices to choose $w_{ij}$ large enough so that the algorithm \textsc{WTHD} never chooses to move a disc from  peg $i$ to peg $j$. As the expression \textit{"large enough"} is very relative, we expand our study to search about the minimum value that can be given to $w_{ij}$ so that there is at least one optimal solution without any disc movement from peg $i$ to peg $j$. As a first observation, it is clear that it is always sufficient to choose $w_{ij}=\infty$ so that the optimal solution(s) does not admit movements from peg $i$ to peg $j$ simply because $\infty$ is always large enough.\\

The following lemma characterize the condition that must be fulfilled so that an instance of the WTH problem has an optimal solution with no disc moves  from certain peg to another.

\begin{lemma}\label{lemma1}
A WTH instance of $n$ discs, has no optimal solution with movements of discs from peg $i$ to peg $j$ if and only if 

\begin{align*}
w_{ij}&>w_{ik}+w_{kj}+\max\{0,2C_{n-1}^{i,j}+C_{n-1}^{j,i}-C_{n -1}^{i,k}-C_{n-1}^{k,j} \},\; and\\
C_{n}^{i,j}&=2C_{n-1}^{i,j}+C_{n-1}^{j,i}+w_{ik}+w_{kj}.
\end{align*}
\end{lemma}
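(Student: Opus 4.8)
The plan is to package the two displayed conditions into a single comparison between the branches of the recursion of Theorem~\ref{theo1}, and then to read the optimal solutions off from the itinerary of the largest disc. Set $L_n^{i,j}=C_{n-1}^{i,k}+C_{n-1}^{k,j}+w_{ij}$ and $R_n^{i,j}=2C_{n-1}^{i,j}+C_{n-1}^{j,i}+w_{ik}+w_{kj}$, so that $C_n^{i,j}=\min\{L_n^{i,j},R_n^{i,j}\}$. First I would record the identity $2C_{n-1}^{i,j}+C_{n-1}^{j,i}-C_{n-1}^{i,k}-C_{n-1}^{k,j}=(R_n^{i,j}-w_{ik}-w_{kj})-(L_n^{i,j}-w_{ij})$ and deduce from it that the conjunction of the two conditions of the lemma is equivalent to the single pair ``$L_n^{i,j}>R_n^{i,j}$ and $w_{ij}>w_{ik}+w_{kj}$'': when the bracket above is nonnegative the first displayed condition reads exactly $L_n^{i,j}>R_n^{i,j}$ and then $w_{ij}>w_{ik}+w_{kj}$ follows; when it is negative it reads exactly $w_{ij}>w_{ik}+w_{kj}$ and then $L_n^{i,j}>R_n^{i,j}$ follows; and in either case $L_n^{i,j}>R_n^{i,j}$ already gives $C_n^{i,j}=R_n^{i,j}$, so the second displayed condition is automatic.

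For the ``if'' direction, assume $L_n^{i,j}>R_n^{i,j}$ and $w_{ij}>w_{ik}+w_{kj}$ and let $S$ be any optimal solution. Disc $n$ is moved either once (necessarily $i\to j$ — its \emph{direct} route) or twice (necessarily $i\to k\to j$ — its \emph{indirect} route), as in the proof of Theorem~\ref{theo1}; in the direct case $S$ splits into an optimal $(n-1)$-disc transfer $i\to k$, the move $i\to j$, and an optimal transfer $k\to j$, so its cost is at least $L_n^{i,j}>R_n^{i,j}=C_n^{i,j}$, which is impossible. Hence disc $n$ takes the indirect route, and since the total cost equals $R_n^{i,j}$ the three blocks around and between its two moves are themselves optimal solutions of the $(n-1)$-disc problems $i\to j$, $j\to i$ and $i\to j$. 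It then remains to show that none of these blocks uses the edge $i\to j$, and for this I would run a strong induction on $n$; the base case $n=1$ is exactly $w_{ij}>w_{ik}+w_{kj}$, which makes the indirect route the unique optimum of a one-disc transfer $i\to j$.

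To make the induction close I would prove the statement simultaneously for all six ordered pairs of pegs, characterising, for each pair $(a,b)$ and size $m$, when every optimal $m$-disc transfer $a\to b$ avoids the edge $i\to j$ in terms of a comparison between the two branches of $C_m^{a,b}$, plus the side condition $w_{ij}>w_{ik}+w_{kj}$ where relevant. The step I expect to be the real obstacle is verifying that the hypothesis at level $n$ genuinely descends: it must force, one level down, every sub-block of direction $a\to k$ or $k\to b$ whose indirect route would insert an $i\to j$ move to take its direct route instead, and every sub-block of direction $i\to j$ to take its indirect route. Concretely this should reduce to a monotonicity statement — that ``defects'' such as $2C_m^{i,j}+C_m^{j,i}-C_m^{i,k}-C_m^{k,j}$ are nondecreasing in $m$ — which I would try to prove directly from the recursion of Theorem~\ref{theo1} by induction on $m$.

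For the ``only if'' direction I would argue by contraposition: suppose ``$L_n^{i,j}>R_n^{i,j}$ and $w_{ij}>w_{ik}+w_{kj}$'' fails. If $L_n^{i,j}\le R_n^{i,j}$, then $C_n^{i,j}=L_n^{i,j}$ is attained by the solution carrying disc $n$ straight from $i$ to $j$, which is an optimal solution using the edge $i\to j$. If instead $w_{ij}\le w_{ik}+w_{kj}$, then following the recursion of any optimal solution downwards through the $i\to j$ sub-problems one reaches either a block whose largest disc already takes the direct route $i\to j$, or, at the bottom, a one-disc transfer $i\to j$ for which the single move $i\to j$ is among the optimal realisations; substituting that realisation produces an optimal solution containing an $i\to j$ move. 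In either case the conclusion of the lemma fails, which completes the plan.
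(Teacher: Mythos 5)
Your preliminary reduction is correct: the two displayed conditions are jointly equivalent to the pair ``$L_n^{i,j}>R_n^{i,j}$ and $w_{ij}>w_{ik}+w_{kj}$'' (indeed the first displayed condition alone implies both, since $\max\{0,\Delta\}\geq 0$ and $\max\{0,\Delta\}\geq\Delta$, and the second is then automatic). Your contrapositive argument for the ``only if'' direction is also sound: if $L_n^{i,j}\leq R_n^{i,j}$ the direct-route solution is optimal and uses the edge, and if $w_{ij}\leq w_{ik}+w_{kj}$ one descends through the $i\to j$ sub-blocks until a level where the direct choice is (weakly) optimal and substitutes it in. Up to this point you are ahead of the paper, whose proof of this direction is essentially the same observation made at the top level only.

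The genuine gap is in the ``if'' direction, and you have located it yourself without closing it. Two things are missing. First, you invoke ``as in the proof of Theorem~\ref{theo1}'' the claim that every optimal solution moves disc $n$ either once ($i\to j$) or twice ($i\to k\to j$) and decomposes into optimal sub-blocks; Theorem~\ref{theo1}'s proof only exhibits these two strategies as candidates, it does not show that all optimal solutions have this form (with zero weights permitted, non-canonical optimal solutions certainly exist, and they must be argued not to use the edge $i\to j$). Second, and more seriously, the descent step is not a technical convenience but the entire content of the implication: if $L_n^{i,j}>R_n^{i,j}$ while $L_{n-1}^{i,j}\leq R_{n-1}^{i,j}$, then the first $(n-1)$-disc block of direction $i\to j$ admits an optimal realisation in which disc $n-1$ crosses directly from $i$ to $j$, and plugging it in yields an optimal $n$-disc solution using the forbidden edge — i.e.\ the lemma would be false. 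So the monotonicity of the ``defect'' in $m$, together with the analogous statements for the $j\to i$ block and the four other ordered pairs that arise one level further down, is exactly what has to be proved, and your proposal only announces that you ``would try to prove'' it. As submitted, the backward implication is therefore not established. (For comparison, the paper's own proof of this direction consists of the single assertion that the two inequalities make \textsc{WTHD} never choose an $i\to j$ move and that hence no optimal solution contains one; it does not address the recursive sub-calls or the non-uniqueness of optimal solutions at all. Your plan is the more rigorous route, but it is a plan, not a proof.)
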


\begin{proof}
If no optimal solution has a disc movement from peg $i$ to peg $j$ then the solution provided by Algorithm \textsc{WTHD} has also no disc movement from peg $i$ to peg $j$ which implies  $w_{ij}>w_{ik}+w_{kj}$ and 
$$C_{n-1}^{i,k}+C_{n-1}^{k,j}+w_{ij}>2C_{n-1}^{i,j}+C_{n-1 }^{j,i}+w_{ik}+w_{kj}.$$
Then $C_{n}^{i,j}=2C_{n-1}^{i,j}+C_{n-1}^{j,i}+w_{ik}+w_{kj}$ and 
\begin{equation*}
    w_{ij}>\max\{w_{ik}+w_{kj},2C_{n-1}^{i,j}+C_{n-1}^{j,i}+w_{ik}+ w_{kj}-C_{n-1}^{i,k}-C_{n-1}^{k,j}\}.
\end{equation*}
Thus,
\begin{equation*}
    w_{ij}>w_{ik}+w_{kj}+\max\{0,2C_{n-1}^{i,j}+C_{n-1}^{j,i}-C_{n -1}^{i,k}-C_{n-1}^{k,j} \}.
\end{equation*}
In the other side if 
\begin{align*}
w_{ij}&>w_{ik}+w_{kj}+\max\{0,2C_{n-1}^{i,j}+C_{n-1}^{j,i}-C_{n -1}^{i,k}-C_{n-1}^{k,j} \},\; and\\
C_{n}^{i,j}&=2C_{n-1}^{i,j}+C_{n-1}^{j,i}+w_{ik}+w_{kj}.
\end{align*}
We obtain
\begin{align*}
w_{ij}&>w_{ik}+w_{kj},\; and\\
C_{n -1}^{i,k}+C_{n-1}^{k,j}+w_{ij}&<2C_{n-1}^{i,j}+C_{n-1}^{j,i}+w_{ik}+w_{kj}.
\end{align*}
Therefor, Algorithm \textsc{WTHD} will never choose  to move a disc from peg $i$ to peg $j$. Hence no optimal solution will have a disc move from peg $i$ to peg $j$. 
\end{proof}
The next theorem is a generalization of Lemma \ref{lemma1}, it characterizes a class of WTH instances that correspond to one of the five variants with restricted disc moves, i.e. for each variant $X\in \{\overleftrightarrow{K}_{3},\overleftrightarrow{L}_{3},\overrightarrow{C} _{3},\overleftrightarrow{K}_{3-},\overrightarrow{C}_{3+}\}$ it gives the characteristics of the weights $(w_{ij})_{1\leq i,j\leq 3 }$ so that at least one optimal solution is of the same type as the solution of variant $X$.
\begin{theorem}\label{theorem2}
Let $X$ be a variant of the Tower of Hanoi with restricted disc moves, and $D=(V(D),A(D))\in\{\overleftrightarrow{K}_{3},\overleftrightarrow {L}_{3},\overrightarrow{C}_{3},\overleftrightarrow{K}_{3-},\overrightarrow{C}_{3+}\}$ its movement digraph. An instance of the WTH has an optimal solution corresponds to the restriction on disc moves of variant $X$  if and only if for all $(i,j)\notin A(D)$, $i\neq j$, we have
\begin{equation*}
w_{ij}>w_{ik}+w_{kj}+\max\{0,2C_{n-1}^{i,j}+C_{n-1}^{j,i}-C_{n -1}^{i,k}-C_{n-1}^{k,j} \},
\end{equation*}
and $C_{n}^{i,j}=2C_{n-1}^{i,j}+C_{n-1}^{j,i}+w_{ik}+w_{kj}$.
\end{theorem}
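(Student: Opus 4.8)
The plan is to derive Theorem \ref{theorem2} from Lemma \ref{lemma1} applied to every forbidden arc at once, the only real work being to check that these local conditions are mutually consistent all the way down the recursion. First I would recall the structural content of the proof of Theorem \ref{theo1}: every optimal solution of a WTH on $m$ discs with source $a$ and destination $b$ moves the largest disc either once, directly along $(a,b)$, after which it recursively solves $(m-1,a,c)$ and $(m-1,c,b)$ with $c=6-a-b$, or twice along $a\to c\to b$, recursively solving $(m-1,a,b)$, $(m-1,b,a)$, $(m-1,a,b)$; and the choice is forced whenever the corresponding inequality in \eqref{1} is strict. Hence a disc is moved along a given arc $(i,j)$ somewhere in a solution exactly when, at some level, a sub-instance of type $(m,i,j)$ is solved by the direct option, or a sub-instance of type $(m,i,k)$ or $(m,k,j)$ (with $k=6-i-j$) is solved by the two-move option.

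The point that makes the five digraphs work is elementary: a digraph that is strongly connected on three vertices has, for every missing arc $(i,j)$, both arcs $(i,k)$ and $(k,j)$ present; so for $X$ with digraph $D$ and any $(i,j)\notin A(D)$ the detour $i\to k\to j$ used in Lemma \ref{lemma1} lies inside $A(D)$, and the sub-instances it produces are again of the types controlled by the hypothesis. For the ($\Leftarrow$) direction I would then induct on $n$: if the pair of conditions of Lemma \ref{lemma1} holds for every $(i,j)\notin A(D)$, then the strict inequality $w_{ij}>w_{ik}+w_{kj}+\max\{0,\dots\}$ forces the direct option to strictly lose at every sub-instance $(m,i,j)$ and the two-move option — whose cost carries the large term $w_{ij}$ — to strictly lose at every sub-instance $(m,i,k)$ and $(m,k,j)$; hence no forbidden arc is used anywhere in the search tree of any instance, which is exactly the claim. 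The ($\Rightarrow$) direction is the contrapositive: if some condition fails for a forbidden arc, Lemma \ref{lemma1} exhibits an optimal solution of $(n,i,j)$ using $i\to j$, and propagating this through the recursion forces every optimal solution of the generic instance to use a forbidden arc.

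The step I expect to be the genuine obstacle is the downward propagation of the hypotheses: to run the induction in ($\Leftarrow$) one needs the level-$n$ conditions to imply their level-$m$ analogues for all $m\le n$, equivalently one must control the sign of $\Delta_m^{i,j}:=2C_{m-1}^{i,j}+C_{m-1}^{j,i}-C_{m-1}^{i,k}-C_{m-1}^{k,j}$ uniformly in $m$. I would handle this by noting that, once the hypotheses are in force, the forbidden-arc costs satisfy the linear recursion $C_m^{i,j}=2C_{m-1}^{i,j}+C_{m-1}^{j,i}+w_{ik}+w_{kj}$ (the $\min$ in \eqref{1} is resolved), while the allowed-arc costs never absorb a term $w_{ij}$ with $(i,j)$ forbidden because such a term is by hypothesis too large to be selected; comparing the resulting closed forms — the forbidden-arc costs grow like $3^m$ times an explicit constant and dominate the allowed-arc costs — pins $\Delta_m^{i,j}$ down and shows $w_{ij}$ stays large enough at every level. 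Together with the base cases $n=1,2$ checked directly against \eqref{1}, this closes the argument.
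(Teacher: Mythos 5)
Your overall route is the same as the paper's: the paper's entire proof of Theorem \ref{theorem2} is the single sentence ``for each $(i,j)\notin A(D)$ we do the same reasoning as the previous proof,'' i.e.\ Lemma \ref{lemma1} applied arc by arc, which is exactly your starting point. Where you diverge is that you have correctly put your finger on the step this one-liner (and, in fact, the proof of Lemma \ref{lemma1} itself) silently skips: the stated conditions compare costs only at the top level $n$, whereas a forbidden arc $(i,j)$ can be used deep inside the recursion, either by a sub-instance $(m,i,j)$ choosing the direct option or by a sub-instance $(m,i,k)$ or $(m,k,j)$ choosing the two-move option. Your observation that strong connectivity of $D$ guarantees $(i,k),(k,j)\in A(D)$ for every missing arc $(i,j)$, so that the detour never itself violates the restriction, is also a point the paper never makes explicit and is genuinely needed for the ``if'' direction.

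The gap is in your resolution of the propagation step, which is asserted rather than proved. You need the level-$n$ hypothesis $w_{ij}>w_{ik}+w_{kj}+\max\{0,\Delta_n^{i,j}\}$ to imply, for every $m\le n$, both that the direct option loses at $(m,i,j)$ (which requires controlling the sign and monotonicity of $\Delta_m^{i,j}=2C_{m-1}^{i,j}+C_{m-1}^{j,i}-C_{m-1}^{i,k}-C_{m-1}^{k,j}$ as $m$ varies, and it is not clear that $\max\{0,\Delta_m^{i,j}\}\le\max\{0,\Delta_n^{i,j}\}$) and that the two-move option loses at $(m,i,k)$ and $(m,k,j)$ (which is a comparison of the form $w_{ij}>C_{m-1}^{i,j}+C_{m-1}^{j,k}+w_{ik}-2C_{m-1}^{i,k}-C_{m-1}^{k,i}-w_{jk}$ that does not follow from the stated hypothesis at all, since the right-hand side involves costs that themselves depend on $w_{ij}$). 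Your appeal to ``the forbidden-arc costs grow like $3^m$ times an explicit constant and dominate the allowed-arc costs'' is exactly the claim that needs proof, and the circularity just noted (forbidden-arc weights feeding into the very costs you are comparing them against) is why it is not routine. So your proposal is more honest than the paper's proof about what must be shown, but it does not yet show it; as written, both arguments leave the same hole, and yours additionally commits to a specific closed-form comparison that is not carried out.
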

\begin{proof}
for each $(i,j)\notin A(D)$, $i\neq j$, we do the same reasoning as the previous proof.
\end{proof}
We present bellow two direct results of Theorem \ref{theorem2} which are  Corollaries 
\ref{lin} and \ref{cyc} that characterize the class of WTH instances in which the optimal solution respects the restrictions on disc moves of the linear variant and the cyclic variant respectively.
\begin{corollary}\label{lin}

An instance of the WTH, accepts the linear variant solution as an optimal solution if and only if
\begin{equation*}
w_{13}>w_{12}+w_{23}+\max\{0,2C_{n-1}^{1.3}+C_{n-1}^{3.1}-C_{n -1}^{1,2}-C_{n-1}^{2,3} \}
\end{equation*}
and
\begin{equation*}
w_{31}>w_{32}+w_{21}+\max\{0,2C_{n-1}^{3.1}+C_{n-1}^{1.3}-C_{n -1}^{3,2}-C_{n-1}^{2,1} \}.
\end{equation*}
\end{corollary}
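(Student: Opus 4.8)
The plan is to derive Corollary~\ref{lin} as a direct specialization of Theorem~\ref{theorem2} to the case $X=\overleftrightarrow{L}_3$, so the real content is bookkeeping: reading off the forbidden arcs of the linear digraph and checking that the two displayed inequalities already encode the full hypothesis of Theorem~\ref{theorem2}. First I would record that the movement digraph of the linear variant is $A(\overleftrightarrow{L}_3)=\{(1,2),(2,1),(2,3),(3,2)\}$, since the moves between the non-adjacent pegs $1$ and $3$ are exactly the forbidden ones. Hence the set of ordered pairs $(i,j)$ with $i\neq j$ and $(i,j)\notin A(\overleftrightarrow{L}_3)$ is precisely $\{(1,3),(3,1)\}$, and in both of these pairs the intermediate peg is $k=6-i-j=2$. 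Substituting $(i,j,k)=(1,3,2)$ and then $(i,j,k)=(3,1,2)$ into the two conditions of Theorem~\ref{theorem2} produces exactly the two inequalities stated in the corollary, together with the side equalities $C_{n}^{1,3}=2C_{n-1}^{1,3}+C_{n-1}^{3,1}+w_{12}+w_{23}$ and $C_{n}^{3,1}=2C_{n-1}^{3,1}+C_{n-1}^{1,3}+w_{32}+w_{21}$.

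Next I would show that these two equalities are automatic consequences of the two inequalities, which is what allows them to be omitted from the statement. Writing $\Delta_{ij}=2C_{n-1}^{i,j}+C_{n-1}^{j,i}-C_{n-1}^{i,k}-C_{n-1}^{k,j}$, the inequality $w_{ij}>w_{ik}+w_{kj}+\max\{0,\Delta_{ij}\}$ forces in particular $w_{ij}>w_{ik}+w_{kj}+\Delta_{ij}$, that is, $C_{n-1}^{i,k}+C_{n-1}^{k,j}+w_{ij}>2C_{n-1}^{i,j}+C_{n-1}^{j,i}+w_{ik}+w_{kj}$; by the recursion of Theorem~\ref{theo1} the minimum defining $C_{n}^{i,j}$ is then attained strictly at the second branch, which is precisely the claimed equality. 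Applying this with $(i,j)=(1,3)$ and with $(i,j)=(3,1)$ yields both equalities, so whenever the two inequalities hold, all hypotheses of Theorem~\ref{theorem2} for $X=\overleftrightarrow{L}_3$ are satisfied, and Theorem~\ref{theorem2} returns an optimal solution respecting the linear restriction. Conversely, if an instance admits such an optimal solution, Theorem~\ref{theorem2} immediately gives the two inequalities. This establishes the equivalence.

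The only delicate point I anticipate is the reduction step itself---being careful that the two required equalities are genuine consequences of the inequalities rather than independent assumptions, so that dropping them from the statement is legitimate---together with the (already packaged) observation inside Theorem~\ref{theorem2} that ``no optimal solution uses arc $(1,3)$'' and ``no optimal solution uses arc $(3,1)$'' jointly force the existence of an optimal solution avoiding both arcs. Everything else is a mechanical substitution of the two forbidden pairs into Theorem~\ref{theorem2}, with no new estimates needed.
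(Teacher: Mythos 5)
Your proposal is correct and follows the same route the paper intends: the paper gives no proof at all, presenting the corollary as an immediate specialization of Theorem~\ref{theorem2} to the two forbidden arcs $(1,3)$ and $(3,1)$ of $\overleftrightarrow{L}_{3}$ (both with intermediate peg $k=2$), which is exactly what you carry out. Your additional observation---that the equality $C_{n}^{i,j}=2C_{n-1}^{i,j}+C_{n-1}^{j,i}+w_{ik}+w_{kj}$ is forced by the strict inequality through the recursion of Theorem~\ref{theo1}, so the corollary may legitimately omit the equality condition that appears in Theorem~\ref{theorem2}---is a worthwhile detail the paper leaves implicit.
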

\begin{corollary}\label{cyc}

An instance of the WTH, accepts the cyclic variant solution as an optimal solution if and only if
\begin{equation*}
w_{13}>w_{12}+w_{23}+\max\{0,2C_{n-1}^{1.3}+C_{n-1}^{3.1}-C_{n -1}^{1,2}-C_{n-1}^{2,3} \},
\end{equation*}
\begin{equation*}
w_{32}>w_{31}+w_{12}+\max\{0,2C_{n-1}^{3,2}+C_{n-1}^{2,3}-C_{n -1}^{3,1}-C_{n-1}^{1,3} \},
\end{equation*}
\begin{equation*}
w_{21}>w_{23}+w_{31}+\max\{0,2C_{n-1}^{2,1}+C_{n-1}^{1,2}-C_{n -1}^{2,3}-C_{n-1}^{3,1} \}.
\end{equation*}
\end{corollary}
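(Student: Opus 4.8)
The plan is to obtain Corollary~\ref{cyc} as an immediate specialization of Theorem~\ref{theorem2} to the single digraph $D=\overrightarrow{C}_{3}$. With the peg labelling of Figure~\ref{GFC}, the directed triangle $\overrightarrow{C}_{3}$ has arc set $A(\overrightarrow{C}_{3})=\{(1,2),(2,3),(3,1)\}$, so the ordered pairs $(i,j)$ with $i\neq j$ that are \emph{not} arcs of $D$ are exactly $(1,3)$, $(3,2)$ and $(2,1)$. These three pairs are the ones over which the hypothesis of Theorem~\ref{theorem2} is quantified, so the corollary's ``if and only if'' is literally the equivalence of Theorem~\ref{theorem2} for this $D$, written out pair by pair.

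First I would instantiate, for each of the three forbidden pairs, the intermediate peg $k=6-i-j$: this gives $k=2$ for $(i,j)=(1,3)$, $k=1$ for $(i,j)=(3,2)$ and $k=3$ for $(i,j)=(2,1)$. Substituting these into the inequality $w_{ij}>w_{ik}+w_{kj}+\max\{0,2C_{n-1}^{i,j}+C_{n-1}^{j,i}-C_{n-1}^{i,k}-C_{n-1}^{k,j}\}$ of Theorem~\ref{theorem2} reproduces, one after another, the three displayed conditions of the corollary.

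Next I would account for the second clause of Theorem~\ref{theorem2}, the equalities $C_{n}^{i,j}=2C_{n-1}^{i,j}+C_{n-1}^{j,i}+w_{ik}+w_{kj}$, which Corollary~\ref{cyc} does not restate. Each of the three strict inequalities contains, as the case in which the $\max$ is attained at its nonnegative argument, the relation $C_{n-1}^{i,k}+C_{n-1}^{k,j}+w_{ij}>2C_{n-1}^{i,j}+C_{n-1}^{j,i}+w_{ik}+w_{kj}$; by the recursion of Theorem~\ref{theo1} the minimum defining $C_{n}^{i,j}$ is then attained at its second argument, so the corresponding equality holds automatically and may be suppressed. This closes the argument.

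Since all the analytic content already sits in Theorem~\ref{theorem2} (and ultimately in Lemma~\ref{lemma1}), there is no genuine obstacle here; the only point demanding care is bookkeeping, namely reading the orientation of $\overrightarrow{C}_{3}$ off Figure~\ref{GFC} correctly and attaching to each forbidden pair $(i,j)$ the right intermediate peg $k=6-i-j$ before substituting. It is also worth recalling, before invoking the theorem, that $\overrightarrow{C}_{3}$ is strongly connected and hence a solvable variant, so that speaking of ``an optimal solution of the WTH that respects the disc-move restriction of $\overrightarrow{C}_{3}$'' is meaningful.
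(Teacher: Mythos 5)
Your proof is correct and matches the paper's (implicit) argument: the corollary is presented there as a "direct result" of Theorem~\ref{theorem2}, obtained exactly as you do by instantiating it at $D=\overrightarrow{C}_{3}$ with forbidden pairs $(1,3),(3,2),(2,1)$ and $k=6-i-j$, and your observation that the equality clause is forced by the strict inequality (since $\max\{0,X\}\ge X$) is precisely the detail needed to justify suppressing it. Note only that your (correct) substitution for $(i,j)=(3,2)$, $k=1$ yields $-C_{n-1}^{3,1}-C_{n-1}^{1,2}$, whereas the paper's second display writes $-C_{n-1}^{3,1}-C_{n-1}^{1,3}$; this is a typo in the stated corollary, not an error in your argument.
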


\section{Conclusion}
A new generalization of the Tower of Hanoi puzzle was presented in this paper, in this generalization the moves between pegs are weighted and the objective is to solve the puzzle while minimizing the total cost. The special thing about this new generalization is that it generalize the concept of the minimum number of moves in the classical Tower of Hanoi so that each move is measured using a positive weight. We gave a recursive formula of the total cost in terms of the number of discs, also we introduced an optimal algorithm to solve this new optimization problem using dynamic programming. We established also  the relationship between the weighted Tower of Hanoi and its variants with restricted disc moves. Furthermore, due to the complexity time of the recursive algorithm proposed in this paper, one could think about using heuristics and meta-heuristics to solve the WTH problem.



\bibliographystyle{plain}
\bibliography{bibou.bib}

\end{document}